\definecolor{hyptxt}{rgb}{0.7, 0.4, 0.9}
\newcommand\symbolwithin[2]{%
  {\mathmakebox[\widthof{\ensuremath{{}#2{}}}][c]{{#1}}}}
\newtheorem{thm}{Theorem}
\newtheorem{assum}[thm]{Assumption}
\newcommand{\belem}{\begin{lemma}}
\newcommand{\enlem}{\end{lemma}}
\newtheorem{defn}{Definition}
\newtheorem{prop}[defn]{{Proposition}}
\newtheorem{lemma}[defn]{{Lemma}}
\newtheorem{comm}{COMMENT}
\newtheorem{rem}[defn]{Remark}
\newtheorem{ex}[defn]{Example}
\newcommand{\betheo}{\begin{thm}}
\newcommand{\entheo}{\end{thm}}
\definecolor{hervecolor}{rgb}{0.8,0,0.7}
\newcommand{\be}{\begin{equation}}
\newcommand{\en}{\end{equation}}
\newcommand{\bea}{\begin{eqnarray}}
\newcommand{\ena}{\end{eqnarray}}
\newcommand{\beano}{\begin{eqnarray*}}
\newcommand{\enano}{\end{eqnarray*}}
\newcommand{\bee}{\begin{enumerate}}
\newcommand{\ene}{\end{enumerate}}
\newcommand{\beprop}{\begin{prop}}
\newcommand{\enprop}{\end{prop}}
\newcommand{\berem}{\begin{rem}$\!\!${\bf }$\;$\rm }
\newcommand{\beex}{\begin{ex}$\!\!${\bf }$\;$\rm }
\newcommand{\enex}{ \end{ex}}
\newcommand{\enrem}{ \end{rem}}
\newcommand{\becomm}{\begin{comm}$\!\!${\bf }$\;$\rm }
\newcommand{\encomm}{ \end{comm}}
\newcommand{\ket}[1]{|\kern.3ex#1\kern.3ex\rangle}
\newcommand{\bra}[1]{\langle\kern.3ex #1 \kern.3ex|}
\newcommand{\scalar}[2]{\langle\kern.3ex #1 \kern.3ex|\kern.3ex#2\kern.3ex\rangle}
\newcommand{\mc}{\mathcal}
\newcommand{\ii}{\mathsf{i}}
\newcommand*\interior[1]{\mathring{#1}}
\newcommand{\ip}[2]{\langle{#1},{#2}\rangle}
\newcommand{\mb}{\mathbb}
\newcommand{\sob}{W^{1,2}({\mb R})}
\newcommand{\ltwo}{L^2({\mb R})}
\newcommand{\ds}{\displaystyle}
\def\calH{{\mathcal H }}
\def\D{\mathcal{D}}
\def\R{\mathbb{R}}
\def\C{\mathbb{C}}
\def\lg{\langle }
\def\rg{\rangle }
\def\ud{\mathrm{d}}
\def\sfP{\mathsf{P}}
\def\mFs{\mathfrak{F_s}}
\def\vz{\pmb{0}}
\def\bI{\mathbbm{1}}
\newcommand{\se}{_{\scriptscriptstyle E}}
\newcommand{\ses}{_{\scriptscriptstyle E,\sigma}}
\newcommand{\sez}{_{\scriptscriptstyle E,0}}
\numberwithin{equation}{section}
\begin{document}
\date{\today}
 
\title[Weyl-Heisenberg quantization for interval]{Regularized quantum motion in a bounded set: Hilbertian aspects }
\author[Bagarello-Gazeau-Trapani]{
 Fabio Bagarello$^{\mathrm{a},\mathrm{b}}$, Jean-Pierre Gazeau$^{\mathrm{c}}$, and Camillo Trapani$^{\mathrm{d}}$}
 
 \address{\emph{$^{\mathrm{A}}$ Dipartimento di Ingegneria,}\\
\emph{Universit\`a di Palermo, I-90128  Palermo, Italy}} 

 \address{\emph{$^{B}$ INFN, Sezione di Catania, Italy}}

\address{\emph{$^{\mathrm{C}}$ Universit\'e Paris Cit\'e}\\
\emph{CNRS, Astroparticule et Cosmologie, F-75013 Paris}} 

\address{\emph{$^{\mathrm{D}}$ Dipartimento di Matematica e Informatica,}\\
\emph{Universit\`a di Palermo, 90123  Palermo, Italy}}

\email{e-mail:
fabio.bagarello@unipa.it, gazeau@apc.in2p3.fr, camillo.trapani@unipa.it}

{\abstract{It is known that the momentum operator canonically conjugated to the position operator  for a particle moving in some bounded interval of the line {(with Dirichlet boundary conditions)  is not essentially self-adjoint}: it has a continuous set of self-adjoint extensions. We prove that essential self-adjointness can be recovered by symmetrically weighting the momentum operator with a  positive bounded  function approximating the indicator function of the considered interval.  This weighted momentum operator  is consistently obtained from a similarly weighted classical momentum through the so-called Weyl-Heisenberg covariant integral quantization of functions or distributions.   
 }}

\maketitle

\subsubsection*{Keywords: selfadjointness, Weyl-Heisenberg group, integral quantization}

\subsubsection*{MSC: 81Q10, 81S30, 81R05, 81R30}

\tableofcontents

\section{Introduction}
\label{intro}

The aim of this paper is to study functional analysis aspects of quantum and semi-classical descriptions of the motion of a particle in a bounded  Borel set in $\R$, e.g.,  an interval, when these quantum models are derived from the Weyl-Heisenberg covariant integral quantization \cite{bergaz14,gazeau18,becugaro17,gazkoi19,gazhumoze20}. Briefly, the latter procedure is the linear map
\begin{equation}
\label{qmap0}
f(q,p) \mapsto A_f = \int_{\R^2}  \frac{\ud q \,\ud p}{2\pi }\,f(q,p)\, e^{\ii (pQ-qP)}\mathfrak{Q}_0 e^{-\ii (pQ-qP)}\, , 
\end{equation}
transforming the function (or distribution) $f(q,p)$ of phase space variables $(q,p)\in \R^2$ into the operator $A_f$ acting in the Hilbert space $L^2(\R,\ud x)$. The operator $\mathfrak{Q}_0$ has unit trace, and the operator $e^{\ii (pQ-qP)}\equiv U(q,p)$ is the familiar unitary Weyl operator built form the self-adjoint position $Q$ and momentum $P$ operators.   
The map \eqref{qmap0} transforms $f=1$ into the identity operator $\bI$, and  is Weyl covariant:
\begin{equation*}
\label{covtrans10}
U(q_0,p_0)\,A_f \,U(q_0,p_0)^{\dag}= A_{\mathcal{T}(q_0,p_0)f}\, , \quad \left(\mathcal{T}(q_0,p_0)f\right)(q,p):= f\left(q-q_0,p-p_0\right)\,. 
\end{equation*} 
This quantization allows a lot of freedom: arbitrariness in the choice of $\mathfrak{Q}_0$, and wide range of objects $f$ to be quantized (e.g. from smooth functions to tempered distributions).
In particular, if one deals with the  motion of a particle, mass $m$,  in an interval $E$, nothing prevents us to restrict the map \eqref{qmap0} to functions mainly localised on $E$, more precisely through the weighting 
{\begin{equation}
\label{weightu}
f(q,p)\mapsto u_E(q)f(q,p)\, , 
\end{equation}
 where $u_E$ is the indicator function of $E$ or a positive smooth approximation of it. Hence the kinetic energy of the particle, $K= \dfrac{p^2}{2m}$ is transformed into a variable mass one,  $u_E(q)K= \dfrac{p^2}{2m(q)}$, $m(q)= \dfrac{m}{u_E(q)}$, which becomes infinite outside $E$. { In this sense, the mechanical model is different from the infinite square well for which the potential $V$,  and consequently the mechanical energy $K+V$, are infinite outside the well.}} 

{Also note that our regularisation procedure is distinct from well-known approaches  in use in quantum mechanics in dealing with singular potentials, for instance those connected with  point interactions \cite{seba86,albev88,brasche94}.}
  
Now, the  noticeable outcome of the present work is the proof of the following (Theorem \ref{prop17}).

\textit{Let $a(x)$ be a positive bounded function on $\R$ and $P= -\ii \dfrac{\ud}{\ud x}$ the (essentially) self-adjoint momentum operator in $L^2(\R,\ud x)$. Then the symmetric $a(x)Pa(x)$ is self-adjoint. }
  
This looks quite a reasonable result, but its proof is not given in the literature, in our knowledge, at least in the way we prove it.
{ Actually, products of functions of the position operator $Q$ and of the momentum $P$ have been considered in the literature not so often and not so recently and mostly as tools for facing other problems like in \cite{arthurs}. For instance, B. Simon in \cite{simon} considers the problem of when an operator of the form $f(Q)g(P)$, with $f,g$ bounded functions, is trace class. No general results for the self-adjointness of $f(Q)g(P)f(Q)$ when $f,g$ are possibly unbounded, but reasonable functions, are know to us.}

From a physical side, the interest of this result {(Theorem 5) }is the following: through the quantization \eqref{qmap0}, the weight function $u_E$ gives rise to the function $a(x)$. {The latter is meant to replace (and regularize) the boundaries of  $E$, to get something smoother.} Surprisingly enough, this regularizing procedure creates other several mathematical subtleties, which are those on which we focus (and solve) in this paper. For this reason we believe that this paper could be interesting both for its mathematical and for its physical aspects.

The material of our quantization procedure and the resulting semi-classical portraits are summarized in Section \ref{CIQWH}.   The application to the quantization of weighted or  truncated observables for the motion in  bounded or semi-bounded sets  and subsequent semi-classical portraits is implemented in Section \ref{qutrunc}. Hilbertian aspects and analysis of relevant operators are developed in Section \ref{windof}.  Specific examples are examined in Section \ref{modmom}. Our results and subsequent questions   are indicated  in Section \ref{conclu}.

\section{Weyl Heisenberg covariant integral quantization of the motion on the  line: a survey}
\label{CIQWH}

In this section, we give an outline of  the Weyl Heisenberg covariant integral quantization, and the subsequent  semi-classical portrait, of the motion on the line for which the phase space of the pairs (position $q$, momentum $p$ within the context of Hamiltonian mechanics, time $b$, frequency  $\omega$ within the context of Signal analysis) is the Euclidean plane $\R^2= \{q,p\}$ (resp.  $\R^2= \{b,\omega\}$) equipped with the   Lebesgue measure $\ud q\,\ud p$ (resp. $\ud b\,\ud \omega$). All justifications and details are found in \cite{bergaz14,gazeau18,becugaro17}. In this paper we keep the notations $(q,p)$ with no considerations of physical dimensions, hence the Planck constant $\hbar$ is put equal to $1$. 

Note that  the mathematical content of this section is not really not new. For instance one can refer to the  contributions by Werner \cite{werner84}, Feichtinger and Kozek \cite{feichwer98},  Luef and Skrettingland \cite{luefskrett18}. 
We should also refer to  the pioneer works by Klauder \cite{klauder63,klauder64}, Berezin \cite{berezin75}, Daubechies and Grossmann \cite{daubgross80,daubgrossreig83} on the integral quantization based on standard coherent state. 

\subsection{Quantization}
 Precisely, we transform a function $f(q,p)$ into an operator $A_f$ in some separable Hilbert space $\mathcal{H}$ through a linear map which sends the function $f=1$ to the identity operator in $\mathcal{H}$ and which respects the basic translational symmetry of the phase space $ \R^2$.   From now on $\mathcal{H}= L^2(\R,\ud x)\equiv L^2(\R)$. Let $Q$ and $P$ be the essentially self-adjoint position and momentum operators defined on the Schwartz space (their common core  domain) $\mathcal{S}(\R) \subset \mathcal{H}$ as $Q\phi(x) =x \phi(x)$ and  $P\phi(x) =-\ii \,\dfrac{\ud }{\ud x}\phi(x)$ respectively, with $[Q,P] = \bI$. The Weyl or displacement operator is the unitary operator defined by
\begin{equation}
\label{Uqp}
U(q,p) = e^{\ii (pQ-qP)}\, , \quad U^{-1}(q,p)= U^{\dag}(q,p)= U(-q,-p)\,,
\end{equation}
with the alternative ``disentanglement'' formulae,  
\begin{equation}
\label{ }
U(q,p)= e^{-\ii\frac{qp}{2}}\,e^{\ii pQ}\,e^{-\ii qP}= e^{\ii\frac{qp}{2}}\,e^{-\ii qP}\,e^{\ii pQ}\, , 
\end{equation}
and the composition property,
\begin{equation}
\label{UUeU}
U(q,p)\,U(q^{\prime},p^{\prime})= e^{-\frac{\ii}{2}(qp^{\prime}-pq^{\prime})}\,U (q+q^{\prime},p+p^{\prime})
\end{equation}
From the above one sees that the map $(q,p) \mapsto U(q,p)$ is a non commutative unitary projective representation of the abelian translation group $\R^2$. It is just the non trivial  part of the von Neumann unitary irreducible representation of the Weyl-Heisenberg (WH)  group 
$G_{\mathcal{WH}}$
\begin{align}
\label{WHUIR}
G_{\mathcal{WH}}&\ni (\zeta,q,p) \mapsto e^{\ii\zeta}U(q,p)\, , \\ 
\label{WHCR} (\zeta,q,p)(\zeta^{\prime},q^{\prime},p^{\prime})&= \left(\zeta + \zeta^{\prime} - \frac{1}{2}(qp^{\prime}-pq^{\prime}),q + q^{\prime},p+p^{\prime}\right)\,.
\end{align}
Given a unit trace operator $\mathfrak{Q}_0$ on $\mathcal{H}$, $\mathrm{Tr}(\mathfrak{Q}_0)=1$, its unitarily displaced versions resolve the identity in  $\mathcal{H}$ (in a weak sense):
 \begin{equation}
\label{resunit}
 \int_{\R^2} \frac{\ud q \,\ud p}{2\pi }\, \mathfrak{Q}(q,p)= \bI\,, \quad \mathfrak{Q}(q,p):= U(q,p)\mathfrak{Q}_0U^{\dag}(q,p)\,.
\end{equation}
For instance, for the one-rank projector $\mathfrak{Q}_0=|\psi\rg\lg\psi|$, $\Vert\psi\Vert=1$, one obtains
\begin{equation}
\label{resunitcs}
 \int_{\R^2} \frac{\ud q \,\ud p}{2\pi}\, |q,p\rg_{\psi}{}_{\psi}\lg q,p|= \bI\,, \quad |q,p\rg_{\psi}:= U(q,p)|\psi\rg\,,
\end{equation}
i.e. the familiar resolution of the identity  by the overcomplete family of \textit{coherent states} $|q,p\rg_{\psi}$. 

The fundamental property \eqref{resunit} results from the application of the Schur's Lemma. It allows to define the  integral quantization of a function (or tempered distribution, see for details \cite{bergaz14}) on the phase space  as  the linear map
\begin{equation}
\label{qmap}
f(q,p) \mapsto A_f = \int_{\R^2}  \frac{\ud q \,\ud p}{2\pi }\, f(q,p)\, \mathfrak{Q}(q,p)
\end{equation}
whenever this integral exists in a weak sense. 
Hence  the identity  $\bI$  is the quantized version of the function $f=1$ while $\mathfrak{Q}_0$ is the quantum version of the Dirac peak $2\pi\delta(q,p)$.

The map is covariant in the following sense:
\begin{equation}
\label{covtrans1}
U(q_0,p_0)\,A_f \,U(q_0,p_0)^{\dag}= A_{\mathcal{T}(q_0,p_0)f}\, , 
\end{equation} 
where 
\begin{equation}
\label{covtrans2}
\quad \left(\mathcal{T}(q_0,p_0)f\right)(q,p):= f\left(q-q_0,p-p_0\right)\,.
\end{equation}
This property justifies the  name ``Weyl-Heisenberg covariant quantization'': no point in the phase space is privileged. 
 
 Let us now introduce the \textit{WH transform} of the operator $\mathfrak{Q}_0$
\begin{equation}
\label{WHtr}
\Pi(q,p) = \mathrm{Tr}\left(U(-q,-p)\mathfrak{Q}_0 \right)\, . 
\end{equation}
This defines a bounded function on the phase space, $\Vert \Pi \Vert_{\infty}=1$, such as $\Pi(0,0)=1$. It can be viewed as an apodization \cite{apodi}  on the plane, which determines the extent of our coarse graining  of the phase space.  In a certain sense this  function  corresponds to the Cohen ``$f$'' function \cite{cohen66} (for more details see \cite{cohen13} and references therein) or to Agarwal-Wolf filter functions \cite{agawo70}. 

The \textit{inverse WH-transform} to \eqref{WHtr} exists due to the two remarkable properties \cite{bergaz14,becugaro17} of the displacement operator $U(q,p)$,
\begin{equation}
\label{IWHtr}
\int_{\R^2} U(q,p) \,\frac{\ud q\, \,\ud p}{2\pi } = 2{\sf P}\ \mbox{and}\ \mathrm{Tr}\left(U(q,p)\right)= 2\pi\delta(q,p)\,,
\end{equation}
where ${\sf P}= {\sf P}^{-1}$ is the parity operator defined as ${\sf P}U(q,p){\sf P}= U(-q,-p)$. Consistently to \eqref{IWHtr}, its trace is put equal to $\mathrm{Tr}({\sf P})=1/2$.
Hence one gets  $\mathfrak{Q}_0$ from $\Pi$:
\begin{equation}
\label{PiQ0}
 \mathfrak{Q}_0 = \int_{\R^2} \frac{\ud q\,\ud p}{2\pi}\,  \Pi(q,p)\, U(q,p) \, . 
\end{equation}
 
Many features of our quantization procedure are better captured if one uses an  alternative quantization formula through the so-called symplectic Fourier transform. The latter is defined for $f\in L^1(\R)$ or, more generally,  for $f$  a tempered distribution, i.e. in $\mathcal{S}^{\prime}(\R^2)$,  as 
  \begin{equation}
\label{symFourqp}
 \mathfrak{F_s}[f](q,p)= \int_{\R^2}e^{-\ii (qp^{\prime}-q^{\prime}p)}\, f(q^{\prime},p^{\prime})\,\frac{\ud q^{\prime}\,\ud p^{\prime}}{2\pi \hbar} \, . 
\end{equation}
 It is involutive, $\mathfrak{F_s}\left[\mathfrak{F_s}[f]\right]=  f$ like its \textit{dual} defined 
 as $\overline{\mathfrak{F_s}}[f](q,p)= \mathfrak{F_s}[f](-q,-p)$. 
  
By replacing $\mathfrak{Q}_0$ in \eqref{qmap} by its expression \eqref{PiQ0} in terms of $\Pi(q,p)$ one easily find the alternative  form of  the quantization map:
\begin{equation}
\label{quantPi1}
f\mapsto A_f= \int_{\R^2} \frac{\ud q\,\ud p}{2\pi} \,  \overline{\mFs}[f](q,p)\, \Pi(q,p) U(q,p) \, .  
\end{equation}
Note that if $\mathfrak{Q}_0$ is symmetric, i.e. $\overline{\Pi(-q,-p)}= \Pi(q,p)$,  a real function $f(q,p)$ is mapped to a symmetric operator $A_f$. Moreover, if  the unit trace-class $\mathfrak{Q}_0$ is  non-negative, i.e., is a density operator, then  a real semi-bounded  function $f(q,p)$ is mapped to a self-adjoint operator $A_f$ through the Friedrich extension \cite{akhglaz81} of its associated semi-bounded quadratic form.

The  formula \eqref{quantPi1} allows to make more easily explicit the action $A_f$ as the  integral operator  
\begin{equation}
\label{Afint}
L^2(\R,\ud x) \ni \phi(x) \mapsto (A_f \phi)(x)  = \int_{-\infty}^{+\infty}\mathrm{d} x^{\prime}\, \mathcal{A}_f(x,x^{\prime})\, \phi(x^{\prime})\, , 
\end{equation}
with  kernel given by
\begin{equation}
\label{AfintK}
 \mathcal{A}_f(x,x^{\prime})= \frac{1}{2\pi}\int_{-\infty}^{+\infty}\mathrm{d} q\, \widehat{f}_{p}(q,x^{\prime}-x)\, \widehat{\Pi}_{p}\left(x-x^{\prime},q- \frac{x+x^{\prime}}{2}\right)\,.
\end{equation}
Here the symbol $\widehat{f}_{p}$ stands for partial Fourier transform of $f$ with respect its second variable $p$:
\begin{equation}
\label{pFTp}
\widehat{f}_{p}(q,y)= \frac{1}{\sqrt{2\pi}}\int_{-\infty}^{+\infty}\ud p \, f(q,p)\, e^{-\ii yp}\,.
\end{equation}
Of course we suppose that the kernel  \eqref{AfintK} and involved partial distributions are well defined, at least in the sense  
of distributions.   
\subsubsection*{Particular functions} 
If $f$ factorises as $f(q,p)= u(q)v(p)$, then the kernel \eqref{AfintK} factorises as 
\begin{equation}
\label{AuvintK}
 \mathcal{A}_{uv}(x,x^{\prime})= \widehat{v}(x^{\prime}-x)\, F
 _u(x,x^{\prime})\,,
 \end{equation}
with 
\begin{equation}
\label{defFxx}
F_u(x,x^{\prime})= \frac{1}{2\pi} \int_{-\infty}^{+\infty}\mathrm{d} q\, u(q)\, \widehat{\Pi}_{p}\left(x-x^{\prime},q- \frac{x+x^{\prime}}{2}\right)\,.
\end{equation}
Hence, in the case $v(p)= p^n$, for a nonnegative integer $n$, standard distribution calculus leads to the formula:
\begin{equation}
\label{Aupn}
 A_{up^n}= \sqrt{2\pi} \sum_{m=0}^n \binom{n}{m} \,(-\ii )^{n-m} \left.\frac{\partial^{n-m}}{\partial {y}^{n-m}}F_u(Q,y)\right|_{y=Q}\,P^{m}\,,
\end{equation}
where we suppose that all manipulations in the above are valid (integrations, derivations,...) at least on the level of distributional calculus,  which entails a supplementary conditions on $\mathfrak{Q}_0$ through \eqref{WHtr}.  

There results that if $f$ depends on $q$ only, $f(q,p)\equiv u(q)$,  its quantization is the multiplication  operator in $\mathcal{H}$ :
\begin{equation}
\label{uq}
A_{u} = \mathsf{w}_u(Q)\, , \quad \mathsf{w}_u(x)=\sqrt{2\pi}  F_u(x,x) = \frac{1}{\sqrt{2\pi }}\,u\ast \overline{\mathcal{F}}[\Pi(0,\cdot)](x)\equiv u\ast\gamma(x)\,,
\end{equation}
where $ \overline{\mathcal{F}}$ is the inverse 1-d Fourier transform, and ``$\ast$'' stands for convolution with respect to the second variable $(\cdot)$. We have introduced in \eqref{uq} the convenient notation:
\begin{equation}
\label{defgam}
\gamma(x):=  \frac{1}{\sqrt{2\pi }}\,\widehat{\Pi}_{p}(0,-x)=  \frac{1}{\sqrt{2\pi }}\,\overline{\mathcal{F}}[\Pi(0,\cdot)] (x)\,,
\end{equation}
where the 1-d Fourier transform concerns the second variable $(\cdot)$. 

If $f(q,p)\equiv v(p)$ is a function of $p$ only, then $A_v$ depends on $P$ only through the convolution:
\begin{equation}
\label{hp}
A_{v} = \mathfrak{v}(P)\, , \quad  \mathfrak{v}(p) = \frac{1}{\sqrt{2\pi }}\,v\ast \mathcal{F}[\Pi(\cdot,0)](p)\equiv v\ast\varpi(p)\, ,
\end{equation}
with
\begin{equation}
\label{varpip}
\varpi(p):= \frac{1}{\sqrt{2\pi }}\,\mathcal{F}[\Pi(\cdot,0)](p)\,. 
\end{equation}

For the simplest cases $u(q)=q$ and $v(p)=p$ one obtains
\begin{equation}
\label{qandp}
A_q = Q  - \ii\left.\frac{\partial}{\partial p} \Pi(0,p)\right|_{p=0}\, , \quad A_p= P + \ii \left.\frac{\partial}{\partial q} \Pi(q,0)\right|_{q=0}\,,
\end{equation}
and so the expected canonical commutation rule $\left[A_q,A_p\right]= \ii \bI$.
This result is actually the direct consequence of the underlying Weyl-Heisenberg covariance when one expresses Eq.\eqref{covtrans1} on the level of infinitesimal generators. Also the additive constants appearing in \eqref{qandp} vanish if moreover  $\Pi$ is even, $\Pi(-q,-p) = \Pi(q,p)$. 

The two following cases have also to be considered in  regard to the content of this paper. 
\begin{align}
\label{Aup1}
 A_{up}&= \mathsf{w}_u(Q) P  + c_u(Q)\, , \quad c_u(x):= -\ii  \sqrt{2\pi} \left.\frac{\partial}{\partial {y}}F_u(x,y)\right|_{y=x}\,,\\
 \label{Aup2}
 A_{up^2}&= \mathsf{w}_u(Q) P^2 +2 c_u(Q)P + d_u(Q)\,, \quad d_u(x):= - \sqrt{2\pi }\,\left.\frac{\partial^2}{\partial^2 y} F_{u}(x,y)\right|_{y=x}\, . 
 \end{align}

\subsubsection*{Particular $\Pi$ or  $\mathfrak{Q}_0$}
\begin{enumerate}
  \item[(i)] If $\Pi= 1$, then $\mathfrak{Q}_0= 2 \sfP$, and $\mFs[\Pi](q,p)= 2\pi \delta(q,p)$.
  \medskip
  \item[(ii)]  If $\mathfrak{Q}_0= |\psi\rg\lg\psi|$, with $\Vert \psi\Vert=1$, then 
 \begin{equation}
\label{psiPiqp}
 \Pi(q,p)= e^{-\ii \frac{qp}{2}}\left(\mathcal{F}[\psi]\ast\mathcal{F}[\mathrm{t}_{-q}\psi]\right)(p)\, , 
\end{equation}
   where $\mathcal{F}[\mathrm{t}_{-q}\psi](p)= e^{\ii qp}\mathcal{F}[\psi](p)$. The corresponding integral kernel is given by
\begin{equation}
\label{Afpsi}
 \mathcal{A}_f(x,x^{\prime})= \frac{1}{\sqrt{2\pi }}\int_{-\infty}^{+\infty}\mathrm{d} q\, \widehat{f}_{\omega}(q,x^{\prime}-x)\, \psi(x-q)\,\overline{\psi(x^{\prime}-q)}
\end{equation}
   The  symplectic Fourier transform of \eqref{psiPiqp} reads
   \begin{equation}
\label{FsPiPsi}
\begin{split}
\mFs[\Pi](q,p)&= \int_{-\infty}^{\infty}\ud y \, e^{-\ii yp}\,\overline{\psi\left(\frac{y}{2}-q\right)}\, \psi\left(-\frac{y}{2}-q\right)\\ &\equiv 2\pi\mathcal{W}_{\psi}(-q,-p)\,,
\end{split}
\end{equation} 
where $\mathcal{W}_{\psi}$ is the Wigner function \cite{wigner32,degosson17}  of the pure state $|\psi\rg\lg\psi|$. It is well known \cite{hudson74,sotoclaverie83} that the latter is a probability distribution on the plane with measure $\ud q\,\ud p$ only if $\psi(x)$ has the form
\begin{equation}
\label{psigauss}
\psi(x)= N\,e^{-ax^2 + bx +c}\,, \quad a,b,c \in \C\,, \quad \mathrm{Re}(a) > 0,
\end{equation}
where $N$ is a suitable normalization constant.

\end{enumerate}

\subsection{Semi-classical portrait}
The quantization formula \eqref{quantPi1} allows to prove  an interesting trace formula (when applicable to $f$). From $\mathrm{Tr}\left(U(q,p)\right)=  2\pi \delta(q,p)$ we obtain 
\begin{equation}
\label{traceAf}
  \mathrm{Tr}\left(A_f\right)=  \overline{\mFs}[f](\vz)=  \int_{\R^2}    f(q,p)\,  \,\frac{\ud q\, \ud p}{2\pi}\, .
\end{equation}
By using \eqref{traceAf} we derive the 
\textit{quantum phase space, i.e.,  semi-classical,  portrait} of the operator as an autocorrelation averaging of the original $f$. 
 More precisely, starting from a function (or distribution) $f(q,p)$ and through its quantum version $A_f$, one defines  the new function $\widecheck{f}(q,p)$ as
\begin{equation}
\label{fmapcf}
\begin{split}
\mathrm{Tr}\left(\mathfrak{Q}(q,p)A_f\right)&=\int_{\R^2}  \, \mathrm{Tr}\left(\mathfrak{Q}(q,p)\,\mathfrak{Q}(q^{\prime},p^{\prime})\right)\, f(q^{\prime},p^{\prime})\frac{\ud q^{\prime}\, \ud p^{\prime}}{2\pi }\\
&:= \widecheck{f}(q,p) \, . 
\end{split} 
\end{equation}
The map $(q^{\prime},p^{\prime})\mapsto  \mathrm{Tr}\left(\mathfrak{Q}(q,p)\,\mathfrak{Q}(q^{\prime},p^{\prime})\right)$ might be a probability distribution if this expression is non negative.  Now, this map is better understood from the equivalent formula,
\begin{equation}
\label{fmapcf}
\widecheck{f}(q,p)  =\int_{\R^2}  \left(\mFs\left[\Pi\right]\ast\mFs\left[\widetilde\Pi\right]\right)(q^{\prime}-q,p^{\prime}-p)\, f(q^{\prime},p^{\prime}) \,\frac{\ud q^{\prime}\,\ud p^{\prime}}{4\pi^2 }\, , 
\end{equation}
where $\widetilde\Pi (q,p):=\Pi (-q,-p)$.
In particular we have for the coordinate functions $\widecheck{q}= q  + q_0$ and $ \widecheck{p}= p + p_0$, for some constants $q_0$ and $p_0$.

Eq. \eqref{fmapcf} represents the convolution ($\sim$ local averaging)  of the original $f$ with the autocorrelation of the symplectic Fourier transform of the (normalised) weight $\Pi(q,p)$. 
Hence,  we are incline to choose windows $\Pi(q,p)$, or equivalently $\mathfrak{Q}_0$,  such that
\begin{equation}
\label{probdist}
 \mFs\left[\Pi\right]
\end{equation}
is a probability distribution on the phase space $\R^2$ equipped with the measure $\dfrac{\ud q\, \ud p}{2\pi}$. 
That  $\mathfrak{Q}_0$ be a density operator, i.e.,
\begin{equation}
\label{Q0dens}
\mathfrak{Q}_0 = \sum_i p_i |\psi_i\rg\lg\psi_i|\, , \quad \Vert \psi\Vert=1\, , \quad 0\leq p_i\leq 1\, , \quad  \sum_i p_i=1\,,
\end{equation} 
is not a sufficient condition \footnote{Contrarily to what it was claimed in \cite{gazeau18}} as it is shown with the pure case state \eqref{FsPiPsi}. 
The condition  is not necessary either, since 
the uniform Weyl-Wigner choice  $\Pi(q,p)= 1$ yields $ \mFs\left[1\right](q,p)= 2\pi \delta(q,p)$ 
and $\mathfrak{Q}_0 = 2\mathrm{P}$, which is not a density operator. Note that $\widecheck f= f$ in this case. 

With a true probabilistic content,  the meaning of the convolution
\begin{equation}
\label{truedist}
 \mFs\left[\Pi\right]\ast\mFs\left[\widetilde\Pi\right]
\end{equation}
is clear: it is the probability distribution  for the difference of two vectors in the phase space plane, viewed as independent random variables,  and thus is perfectly adapted to the abelian and homogeneous structure of the classical phase space: no origin should be privileged.

\section{Quantization and semi-classical portraits of weighted or truncated observables}
\label{qutrunc}

We consider  classical motions which are geometrically
 restricted to hold within some bounded Borel subset $E$,  e.g. the interval $E=[a,b]$,  of the configuration space $\R$.  A standard method to construct a positive smooth approximation of the characteristic (or indicator) function $\chi\se$  of  $E$ is the following (see for instance \cite{donoghue69}). Let us choose a positive function $\omega\se\in C^{\infty}$ which is zero for $x\notin E$ and is normalised in the sense that $\int_{\R}\omega\se(x)\,\ud x= \int\se\omega\se(x)\,\ud x= 1$. For $\sigma>0$ we define 
\begin{equation}
\label{omsig}
 \omega\ses= \frac{1}{\sigma}\omega\se\left(\frac{x}{\sigma}\right)\, .
\end{equation}
The functions $\omega\ses$ are positive, normalised, $C^{\infty}$, and they vanish for $x\notin \sigma E$. Let
\begin{equation}
\label{uEsig}
u\ses(x):= \omega\ses\ast \chi\se(x)= \int\se\omega\ses(x-y)\,\ud y\,. 
\end{equation}
It can be shown that $u\ses\to \chi\se$ as $\sigma \to 0$ almost everywhere, and, moreover, that the convergence is uniform on $\interior{E}$. 
 
We then smoothly  truncate all classical observables to $E$ 
\begin{equation}
\label{trunc}
f(q,p)\mapsto u\ses(q)f(q,p)\equiv f\ses(q,p)\,  .
\end{equation}
In particular, the original canonical coordinates $q$ and $p$ become the truncated observables $u\ses(q)q$ and $u\ses(q)p$ respectively.

We further proceed with the Weyl-Heisenberg  quantization \eqref{qmap} or \eqref{quantPi1}  of the truncated   observables $f\ses$, and obtain the $(E,\sigma)$-modified operator,
\begin{equation} 
\label{quantchif1}
f\ses(q,p)\mapsto A_{f\ses} 
=  \frac{1}{2\pi }\int_{\R^2}  \ud q\,\ud p\,  \Pi(q,p)\,  \overline{\mFs}[f\ses](q,p)\,U(q,p)\,.
\end{equation}
Quantization formulae given in Section \ref{CIQWH} apply here with the change $f\mapsto  f\ses$.
Let us start with the  quantization of the case $f=1$, i.e.,  $f\ses(q,p)= u\ses(q)$. It yields  the ``window'' or ``localisation'' \cite{hewong96} multiplication operator: 
\begin{equation}
\label{restId1}
 A_{u\ses} = (u\ses\ast\gamma)(Q)= \int_\R \ud y\, \gamma(Q-y)\,u\ses \equiv\mathsf{w}_{u\ses}(Q)\, ,
\end{equation}
with the notation \eqref{defgam}. The window function $ \mathsf{w}\se(x)$ will play a fundamental r\^ole in the sequel. Its Fourier transform is given by
  \begin{equation}
\label{hatwE}
\hat{\mathsf{w}}\ses(p)= \sqrt{2\pi}\,\hat{u}\ses(p)\,\hat{\gamma}(p)= \hat{u}\ses(p)\,\Pi(0,-p)\,. 
\end{equation}
For our present purpose, we also give the quantum counterparts of the $(E,\sigma)$-truncated position, momentum and momentum squared ($\sim$ kinetic energy). They are deduced from \eqref{Aup1} and  \eqref{Aup2}, and read as:
\begin{equation}
\label{Eqchi}
 A_{q\ses}= \mathsf{w}\ses(Q)Q + b\ses(Q)\, , \quad b\ses(x) = (u\ses\ast Q\gamma)(x)\,, 
 \end{equation}
 \begin{equation}
 \label{Epchi} A_{p\ses} = \mathsf{w}\ses(Q) P + c\ses(Q)\,, \quad c\ses(x) = -\ii \sqrt{2\pi}\,\left.\frac{\partial}{\partial y} F_{u\ses}(x,y)\right|_{y=x}\,,
 \end{equation}
  \begin{equation}
  \label{Ep2chi} A_{p^2\ses}   = \mathsf{w}\ses(Q) P^2 +2 c\ses(Q)P + d\ses(Q)\,, \quad d\ses(x)= - \sqrt{2\pi }\,\left.\frac{\partial^2}{\partial^2 y} F_{u\ses}(x,y)\right|_{y=x}\,.
\end{equation}
 The modification of the basic commutation rule follows:
\begin{equation}
\label{ccpqE}
\left[A_{q\ses} ,A_{p\ses}\right] = \ii\left[ \mathsf{w}^{\prime}\ses(Q) Q + \mathsf{w}\ses(Q) + \mathsf{w}\ses(Q) b^{\prime}\ses(Q)\right]\,,
\end{equation}
which corresponds to a deformed version of the uncertainty inequality for these operators.

Of course, the validity of the above formulas depends on the regularity   properties of the function $\widehat{\Pi}_{p}(x,y)$ which appears in the expression \eqref{AfintK} of the integral kernel of $A_f$. From now on we make the following minimal assumptions.
\begin{assum}
The apodisation function (or tempered distribution) $\Pi(q,p)$ is chosen such that its symplectic Fourier transform,
\begin{equation}
\label{probdist2}
 \mFs\left[\Pi\right](q,p)=\int_{\R^2}\frac{\ud q^{\prime}\,\ud p^{\prime}}{2\pi}\,e^{-\ii(qp^{\prime}-pq^{\prime})}\,\Pi(q^{\prime},p^{\prime})\,,
\end{equation}
is  non-negative and so, from the normalisation $\Pi(0,0)=1$,   is a probability distribution on $\R^2$ equipped with the measure $\frac{\ud q\,\ud p}{2\pi}$.  
\end{assum}
\begin{assum} The partial Fourier transform of $\Pi(q,p)$, 
\begin{equation}
\label{partFPi}
\widehat{\Pi}_{p}(q,y)= \frac{1}{\sqrt{2\pi}}\int_{-\infty}^{+\infty}\ud p\,e^{-\ii yp}\,\Pi(q,p)\, , 
\end{equation}
 is at least twice  continuously differentiable on $\R^2$. 
\end{assum}
\begin{assum} The partial Fourier transform $\widehat{\Pi}_{p}(q,y)$ is non-negative at $q=0$:
\begin{equation}
\label{nnegpart}
\widehat{\Pi}_{p}(0,y)\geq 0\ \forall y\in \R\,. 
\end{equation}
 \end{assum}
 Note that it results from Assumptions \eqref{partFPi} and \eqref{nnegpart} that the window function $\mathsf{w}\ses(x)$, given by the convolution \eqref{restId1}, is non-negative and goes to $0$ as $x\to \pm\infty$. 
 
Also note that the  Hilbert space in which act the above ``$(E,\sigma)$-modified'' operators  is left unchanged. Thus, in position representation, one continues to deal with  $\mathcal{H}= L^2(\R)$.
Nevertheless, our approach gives rise to a smoothing of the constraint boundary $\partial E$, 
i.e., a ``fuzzy'' boundary, and also, if the function $\gamma$ or $\Pi(0,\cdot)$ is smooth enough,   a smoothing of all  restricted observable $f\ses(q,p)$ introduced in the quantization map \eqref{quantchif1}, including the limit at $\sigma=0$, i.e, when $u\sez =\chi\se$. 
Indeed, there is no mechanics outside the set $\sigma E\times\R$ defined by the position constraint  on the classical level. 
It is however not the same on the quantum level since our quantization method allows to go beyond the boundary of this set in a  way which can be smoothly rapidly decreasing, depending on the function $\Pi(0,p)$.

Consistently,  the semi-classical phase space portrait of the operator \eqref{quantchif1} is given by \eqref{fmapcf}. An equivalent form of the latter reads as more condensed:
\begin{equation}
\label{semclassA}
\widecheck{f}\ses(q,p)  = \frac{1}{2\pi } \left(\overline{\mFs}\left[\Pi\,\widetilde\Pi\right]\ast f\ses\right)(q,p) \, . 
\end{equation}
It should be found to be concentrated on the classical $E\times \R$, and so viewed as a new classical observable defined on the full phase space $\R^2$ where  $q$ and $p$ keep their status of canonical variables. 

Thus, we have the  sequence\\
 \begin{equation}
\label{regseq}
 \begin{split}
\mathrm{virtual} \ f(q,p) \rightarrow \  \mathrm{truncated} \ &f\ses(q,p)\\
&\symbolwithin{\downarrow}{=} \\
 \mathrm{regularised} \ \widecheck{f}\ses(q,p)  \leftarrow\  \mathrm{quantum}\  &A_{f\ses}\,,
\end{split}
\end{equation}\\
allowing to establish a semi-classical dynamics \textit{\`a la} Klauder \cite{klauder12,klauder15},  mainly concentrated on $ E\times\R$ as $\sigma \to 0$. In other words, $\widecheck{f}\ses(q,p)$ is a different regularization of the original $f(q,p)$, which takes the place of $f\ses(q,p)$.

Note that we can confine our study to the elementary case of the interval,
\begin{equation}
\label{Eint}
E= (\alpha,\beta)\,,
\end{equation}
since these intervals generate the $\sigma$-algebra of Borel sets of the real line. 
This particular case allows us  to compare the well-known quantum mechanics of a free particle moving in the infinite square well with boundaries the point $\alpha$ and $\beta$ with the new functional material described in the above. Then  we have to compare the Hilbert space $L^2(\alpha,\beta)$ with the range $\mathcal{R}\se$ of the bounded positive multiplication operator $\sqrt{\mathsf{w}\ses(Q)}$ defined by the square root of the window operator:
\begin{equation}
\label{Winop}
L^2(\R) \ni \phi(x) \mapsto \sqrt{\mathsf{w}\ses(x)}\phi(x) \in \mathcal{R}\se \subset L^2(\R)\,. 
\end{equation}
The closure $\overline{\mathcal{R}\se}$  of this range is a sub-Hilbert space of $L^2(\R)$, which is itself a subspace of $L^2(\R,\mathsf{w}\se(x)\ud x)$.

\section{The question of self-adjointness}
\label{windof}
\subsection{Prelude}
Through the material presented above we are faced to the following  situation.  
Our quantization procedure has yielded  the non-negative window   multiplication operator $\mathsf{w}_{u\ses}(Q)$ given by: 
 \begin{equation*}
\label{restId1}
\mathsf{w}_{u\ses}(Q) =  A_{u\ses} = (u\ses\ast\gamma)(Q)= \int_\R \ud y\, \gamma(Q-y)\,u\ses(y) \, ,
\end{equation*}
with $\gamma(x)= \frac{1}{\sqrt{2\pi }} \overline{\mathcal{F}}[\Pi(0,\cdot)](x)$.
The quantisation of the $(E,\sigma)$-truncated position yields the symmetric bounded, i.e. self-adjoint multiplication operator:
\begin{equation*}
 A_{q\ses}= \mathsf{w}\ses(Q)Q + b\ses(Q)\, , \quad b\ses(x) = (u\ses\ast Q\gamma)(x)\,.
 \end{equation*}
The quantisation of the $(E,\sigma)$-truncated momentum yields the symmetric operator:
\begin{equation*}
  A_{p\ses} = \frac{1}{2}\left\{\mathsf{w}\ses(Q) , P\right\}  + \frac{\ii}{2} \mathsf{w}^{\prime}\ses(Q)+  c\ses(Q)\,, \quad c\ses(x) = -\ii \sqrt{2\pi}\,\left.\frac{\partial}{\partial y} F_{u\ses}(x,y)\right|_{y=x}\,,
\end{equation*}
where 
$F_u(x,x^{\prime})= \frac{1}{2\pi} \int_{-\infty}^{+\infty}\mathrm{d} q\, u(q)\, \widehat{\Pi}_{p}\left(x-x^{\prime},q- \frac{x+x^{\prime}}{2}\right)$.
The quantisation of the $(E,\sigma)$-truncated momentum squared ($\sim$ kinetic energy) yields the symmetric operator:
\begin{equation*}
  A_{p^2\ses}   = \frac{1}{2}\left\{\mathsf{w}\ses(Q), P^2\right\} +  \frac{\ii}{2} \mathsf{w}^{\prime\prime}\ses(Q) +\left\{c\ses(Q),P\right\} + \frac{\ii}{2} c^{\prime}\ses(Q) + d\ses(Q) \,,
\end{equation*}
with $d\ses(x)= - \sqrt{2\pi }\,\left.\frac{\partial^2}{\partial^2 y} F_{u\ses}(x,y)\right|_{y=x}$.
 
 {Note that the  purely $Q$-dependent terms appearing in the expressions of $A_{p\ses} $ and $A_{p^2\ses}$ 
are bounded multiplication operators and do not play any significant role in the self-adjointness properties of these two operators.}  

Hence, we are led to examine the fundamental questions: Are self-adjointness of $ P$ and  $P^2$ preserved under such regularisations, or, equivalently, are the symmetric operators
 \begin{equation}\label{41}
 P \mapsto  \sqrt{\mathsf{w}\ses(Q)} P \sqrt{\mathsf{w}\ses(Q)}\, , \quad  P^2 \mapsto  \sqrt{\mathsf{w}\ses(Q)} P^2 \sqrt{\mathsf{w}\ses(Q)}
\end{equation}
self-adjoint? This is the problem we will consider, at a rather general level, in the rest of the paper.
\subsection{Mathematical setting and results}
Let us examine this question on the level of standard operator analysis.  Let $\calH$ be a Hilbert space. If $X$ is a linear operator, we denote by $D(X)$, $N(X)$, $R(X)$, the domain, the kernel and the image of $X$, respectively.

Assume that $A$ is a bounded everywhere defined symmetric operator and $P$ a self-adjoint operator with domain $D(P)$, not necessarily coinciding with the momentum operator.

We are interested in determining conditions for $APA$ to be self-adjoint. Of course, in view of (\ref{41}), $A$ will be identified later with $\sqrt{\mathsf{w}\ses(Q)}$, while $P$ will be identified with the momentum operator, or with its square.

\medskip
Clearly, $D(APA)=D(PA)= \{x\in \calH: Ax\in D(P)\}$.

It is easily seen that the density of $D(PA)$ and the boundedness of $A$ imply that $(AP)^*=PA$ and therefore $AP$ is closable and $\overline{AP}=(PA)^*$.

\begin{lemma}\label{prop_one} Let $PA$ be densely defined. The following statements hold.
	\begin{itemize}
	\item[(i)] $PA$ is closed and $(APA)^*= (PA)^*A$.
	\item[(ii)]If $AP$ is closed, then $APA$ is self-adjoint.
	\item[(iii)] If $A$ has an everywhere defined bounded inverse,  then $APA$ is self-adjoint. 
\end{itemize}	
	\end{lemma}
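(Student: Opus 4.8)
The plan is to reduce all three parts to the two preliminary identities already recorded just before the statement, namely $(AP)^*=PA$ and, since $PA$ is assumed densely defined, that $AP$ is closable with $\overline{AP}=(PA)^*$. The only extra ingredient I would isolate at the outset is a standard rule for adjoints of products with a bounded left factor: if $A$ is bounded and everywhere defined (so $A^*=A$ here, $A$ being symmetric) and $T$ is densely defined with $AT$ densely defined, then $(AT)^*=T^*A$. I would verify this directly from the definition of the adjoint, writing $\langle ATx,y\rangle=\langle Tx,Ay\rangle$ for $x\in D(T)$, so that $y\in D((AT)^*)$ exactly when $Ay\in D(T^*)$, with $(AT)^*y=T^*Ay$; this fixes both the action and the domain $\{y:Ay\in D(T^*)\}$.

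For \textbf{(i)}, closedness of $PA$ is immediate: $PA=(AP)^*$ is an adjoint, hence closed. For the adjoint of $APA$ I would write $APA=A\,(PA)$ and apply the rule above with $T=PA$, which is densely defined by hypothesis and for which $A(PA)$ carries the dense domain $D(PA)$. This gives $(APA)^*=(PA)^*A$. The point that needs genuine attention is that this is an equality of operators, so I would record explicitly that $D((PA)^*A)=\{x:Ax\in D((PA)^*)\}=D(APA)$ and that the two actions coincide.

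For \textbf{(ii)}, if $AP$ is closed then $AP=\overline{AP}=(PA)^*$, and substituting into (i) yields $(APA)^*=(PA)^*A=(AP)A=APA$ as operators, the domains agreeing since each equals $\{x:Ax\in D(P)\}=D(PA)$. As $APA$ is symmetric (a direct check using $A^*=A$ and $P\subseteq P^*$), this equality $(APA)^*=APA$ is precisely self-adjointness. For \textbf{(iii)} I would verify the hypothesis of (ii), that $AP$ is closed, and then invoke (ii). Given $x_n\in D(P)$ with $x_n\to x$ and $APx_n\to y$, the boundedness of $A^{-1}$ gives $Px_n=A^{-1}(APx_n)\to A^{-1}y$; closedness of $P$ then forces $x\in D(P)$ and $APx=y$, so $AP$ is closed. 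I would also remark that $D(PA)=A^{-1}D(P)$ is dense because $A^{-1}$ is a homeomorphism, so the standing density hypothesis holds automatically in this case.

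The main obstacle is not analytic depth but the domain bookkeeping in \textbf{(i)}: one must be careful to obtain $(APA)^*=(PA)^*A$ as a true operator identity rather than a mere inclusion, and for that the bounded-left-factor rule $(AT)^*=T^*A$ (where equality of domains does hold, in contrast to the right-factor case) is exactly what makes the argument go through. Once (i) is established as an operator identity, parts (ii) and (iii) are short: (ii) is a substitution, and (iii) is a one-line closed-graph argument exploiting the boundedness of $A^{-1}$.
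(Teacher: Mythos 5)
Your proposal is correct and follows essentially the same route as the paper: part (i) via the adjoint identity $(APA)^*=(PA)^*A$ (your bounded-left-factor rule is exactly the computation the paper performs in its display), part (ii) by substituting $AP=\overline{AP}=(PA)^*$, and part (iii) by checking that $AP$ is closed when $A^{-1}$ is bounded. The only cosmetic difference is that you obtain closedness of $PA$ from $PA=(AP)^*$ rather than by the paper's direct sequential argument; both rest on the facts recorded just before the lemma.
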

\begin{proof} (i): The operator $PA$ is closed. Indeed, if $x_n\to x$ and $PAx_n \to y$, then since $Ax_n\to Ax$ it follows that $Ax \in D(P)$ and $PAx_n \to PAx$. 
The density of $D(PA)$ also implies that
	 $(APA)^*$ exists and we have
	\begin{equation}\label{thisone} y\in D((APA)^*) \Leftrightarrow\ip{(APA)x}{y}= \ip{PAx}{Ay}= \ip{x}{(APA)^*y}\end{equation}
	The second equality holds if and only if  $Ay\in D((PA)^*)$ and $(APA)^*y= (PA)^*Ay$.
	
	(ii): If $AP$ is closed then,  $AP= \overline{AP}= (PA)^*$. This implies that $APA=(PA)^*A= (APA)^*$, by (i).
	
	(iii): If $A^{-1}$ exists in ${\mc B}(\calH)$,  $AP$ is closed, as it is easily checked.
\end{proof}

Thus, the self-adjointness of $APA$ is guaranteed if one of the equivalent conditions (a): $AP$ closed or (b): $(PA)^*=AP$, is satisfied.

%
%
\berem Let us suppose that $A$ is injective but has a densely defined unbounded inverse $A^{-1}$ i.e., $0\not\in \rho(A)$, the resolvent of $A$. The operator $A^{-1}$ is self-adjoint. The study of the closedness of $AP$ is more complicated in this case, since $A$ is not bounded from below.
	Assume that $x_n \to x$ and $APx_n \to y$, then we can't conclude that $\{Px_n\}$ is convergent without further assumptions. But if $\{Px_n\}$ converges to some $z\in \calH$, we have
	$$ APx_n \to y \; \mbox{and} \; Px_n \to z.$$
	Since $A^{-1}$ is closed we obtain that $y\in D(A^{-1})$ and $z= A^{-1}y$.
	The closedness of $P$ implies that $x\in D(P)$ and $z=Px$. Then, $A^{-1}y=z=Px$ and, finally $y=APx$.

	\enrem

\beex
Let us suppose that $D(PA)=D(P)$ and $PAf-APf=Bf$, for every $f\in D(P)$, with $B$ bounded and $BD(P)\subset D(P)$.
  We prove that $AP$ is closed. Indeed, let $f_n\to f$ with $APf_n \to g$ Then $Af_n \to Af $ and $PA f_n-APf_n =Bf_n$ converges because of the boundedness of $B$. Hence, $\{PAf_n\}$ is convergent. Since $PA$ is closed , we get $f\in D(PA)=D(P)$ and $PAf= \lim_{n\to \infty}PAf_n$. Therefore, $APf_n \to PAf +Bf$. Thus $f\in D(AP)$ and $g= PAf+Bg$.
  Then $AP$ is closed and, by (ii) of Lemma \ref{prop_one}, $APA$ is self-adjoint.
\enex
We now turn to the main problem.

\begin{thm}\label{thm7} Suppose that $R(A)$ is closed and ${N(A)}\subset D(P)$, $PN(A)\subset N(A)$. Then $AP=(PA)^*$.
\end{thm}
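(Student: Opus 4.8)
The plan is to exploit the fact, recorded just before the statement, that $\overline{AP}=(PA)^*$ always holds; consequently proving $AP=(PA)^*$ is the same as proving that $AP$ is \emph{closed}. So the whole argument reduces to a closed-graph verification for $AP$, whose domain is $D(AP)=D(P)$. First I would set up the geometry coming from the hypothesis that $R(A)$ is closed. Since $A$ is bounded, everywhere defined and symmetric, it is self-adjoint, so $\calH=N(A)\oplus R(A)$ orthogonally with $R(A)=N(A)^\perp$. Writing $A_0$ for the restriction of $A$ to $R(A)$, one checks that $A_0$ maps $R(A)$ \emph{onto} $R(A)$ (because $A(\calH)=A(N(A)^\perp)=R(A)$) and is injective (if $A_0 w=0$ with $w\in N(A)^\perp$ then $w\in N(A)\cap N(A)^\perp=\{0\}$). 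Hence $A_0$ is a bounded bijection of the Hilbert space $R(A)$, and by the bounded inverse theorem $A_0^{-1}\in\mc B(R(A))$. This bounded inverse on the range is the device that will let me undo the action of $A$.

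Next I would show that $N(A)$ \emph{reduces} $P$. Let $E$ be the orthogonal projection onto $N(A)$. The hypothesis $N(A)\subset D(P)$ gives $E\,D(P)\subset D(P)$, since for $x\in D(P)$ both $Ex\in N(A)\subset D(P)$ and $(\bI-E)x=x-Ex\in D(P)$. The essential point is that $P$ leaves $R(A)\cap D(P)$ invariant: for $w\in R(A)\cap D(P)$ and any $m\in N(A)\subset D(P)$, self-adjointness of $P$ together with $PN(A)\subset N(A)$ gives $\ip{Pw}{m}=\ip{w}{Pm}=0$ because $Pm\in N(A)$ and $w\perp N(A)$; hence $Pw\in N(A)^\perp=R(A)$. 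Combined with $PN(A)\subset N(A)$, this says $P$ is block-diagonal with respect to the splitting $\calH=N(A)\oplus R(A)$, i.e. $EPx=PEx$ for all $x\in D(P)$. I expect this reduction step — extracting genuine invariance of the \emph{range} from invariance of the kernel, using only self-adjointness — to be the main conceptual obstacle; everything downstream is bookkeeping on the two blocks.

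Finally I would run the closedness argument. Take $x_n\in D(P)$ with $x_n\to x$ and $APx_n\to y$. Split $x_n=m_n+w_n$ with $m_n=Ex_n\in N(A)$ and $w_n=(\bI-E)x_n\in R(A)\cap D(P)$; by continuity of $E$, $m_n\to Ex=:m$ and $w_n\to(\bI-E)x=:w$. Since $Pm_n\in N(A)=N(A)$ is annihilated by $A$ and $Pw_n\in R(A)$, we get $APx_n=A_0(Pw_n)$. As each $APx_n\in R(A)$ and $R(A)$ is closed, $y\in R(A)$, so applying the bounded operator $A_0^{-1}$ yields $Pw_n=A_0^{-1}(APx_n)\to A_0^{-1}y=:z\in R(A)$. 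Now $w_n\to w$ and $Pw_n\to z$ with $P$ closed force $w\in D(P)$ and $Pw=z$; together with $m\in N(A)\subset D(P)$ this gives $x=m+w\in D(P)$. Computing $APx=A(Pm+Pw)=A_0 z=A_0A_0^{-1}y=y$ shows $x\in D(AP)$ and $APx=y$, so $AP$ is closed and therefore $AP=\overline{AP}=(PA)^*$.
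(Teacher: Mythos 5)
Your argument is correct, but it follows a genuinely different route from the paper's. The paper fixes $z\in D((PA)^*)$ and shows directly that the functional $y\mapsto\ip{Py}{z}$ is bounded on all of $D(P)$: on $R(A)\cap D(P)$ this comes from the lower bound $\|Ay\|\geq\gamma\|Q_{R(A)}y\|$ supplied by the closed range (with $Q_{R(A)}$ the projection onto $R(A)$), and on $N(A)$ from the fact that $P$ restricted to the closed subspace $N(A)\subset D(P)$ is bounded by the closed graph theorem; hence $z\in D(P^*)=D(P)=D(AP)$, and since $(PA)^*\supseteq AP$ always, equality of domains gives $(PA)^*=AP$. You instead prove that $AP$ is closed by a sequential argument, which rests on two structural facts the paper never isolates: that $A_0=A\upharpoonright_{R(A)}$ is a bounded bijection of $R(A)$ with bounded inverse, and that $N(A)$ \emph{reduces} $P$ --- your observation that $\ip{Pw}{m}=\ip{w}{Pm}=0$ forces $P\bigl(R(A)\cap D(P)\bigr)\subset R(A)$ is the genuinely new ingredient, and it is precisely where the hypothesis $PN(A)\subset N(A)$ does its work in your version. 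What your route buys is the stronger intermediate statement that $P$ is block-diagonal for the splitting $\calH=N(A)\oplus R(A)$, together with an explicit identification of the limit $APx=y$; what the paper's route buys is brevity, since a single norm estimate replaces the block-by-block bookkeeping. Both arguments tacitly use that $D(PA)$ is dense (so that $\overline{AP}=(PA)^*$); this does follow from the hypotheses --- $N(A)+A_0^{-1}\bigl(D(P)\cap R(A)\bigr)\subset D(PA)$ is dense --- but it is worth saying explicitly, as the theorem's statement does not list it.
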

\begin{proof} First notice that the assumptions imply that $\calH=N(A)\oplus R(A)$. Moreover, since $R(A)$ is closed, there exists $\gamma>0$ such that
$$ \|Ay\| \geq \gamma \|Qy\|$$
where $Q$ denotes the projection operator onto $R(A)$ (consequence of \cite[Theorem 4.13]{rudin}).
Let $z\in D((PA)^*)$ and $y \in D(P)\cap R(A)$, then $y=Ax$ for some $x\in \calH$. The element $x$ can be written as $x= (x-x_0)+x_0$ with $x_0\in N(A)$ and $x-x_0\in R(A)$ and clearly $y=A(x-x_0)$. Hence
$$ \ip{Py}{z}=\ip{PA(x-x_0)}{z}=\ip{x-x_0}{(PA)^*z}$$
and so
\begin{align*} |\ip{Py}{z}|&\leq \|x-x_0\|\|(PA)^*z\|=\|Q(x-x_0)\|\|(PA)^*z\|\\&\leq \gamma^{-1}\|Ax\|\|(PA)^*z\|=  \gamma^{-1}\|y\|\|(PA)^*z\|.\end{align*}
Every element of $D(P)$ is the sum of an element of $N(A) \subset D(P)$ and one of $R(A)\cap D(P)$. The operator $P$ is bounded on $N(A)$.
Thus,
$$|\ip{Py}{z}| \leq C \|y\|, \forall y\in D(P).$$ Therefore $z\in D(P)$ and in conclusion $(PA)^*=AP$.
\end{proof}

It is easy to prove that
\begin{prop}\label{Prop6}
$R(A)$ is closed if and only if  there exists $m>0$ such that $$\|Ax\|\geq m \|x\|, \quad \forall x \in \calH.$$
\end{prop}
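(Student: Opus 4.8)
The plan is to recognize Proposition \ref{Prop6} as the standard characterization of a bounded operator being \emph{bounded below}, and to prove the two implications separately. Throughout I would use that $A$, being bounded, everywhere defined and symmetric, is self-adjoint, so that the orthogonal splitting $\calH = N(A)\oplus\overline{R(A)}$ holds with $N(A)=R(A)^{\perp}$; this decomposition is what lets me reduce everything to the injective part $N(A)^{\perp}$.

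For the direction asserting that the lower bound implies closedness of $R(A)$, I would argue directly by a Cauchy-sequence estimate. Suppose $\|Ax\|\geq m\|x\|$ for all $x$, and let $y_n=Ax_n$ be a sequence in $R(A)$ with $y_n\to y$. Then $\{y_n\}$ is Cauchy, and the estimate $\|A(x_n-x_k)\|\geq m\|x_n-x_k\|$ forces $\{x_n\}$ to be Cauchy as well; hence $x_n\to x$ for some $x\in\calH$, and continuity of $A$ gives $y=\lim_n Ax_n=Ax\in R(A)$. Thus $R(A)$ is closed. Note that the hypothesis automatically forces $N(A)=\{0\}$, which matches the converse below.

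For the converse, assume $R(A)$ is closed. Then $R(A)$ is itself a Hilbert space, and I would consider the restriction $A_0:=A|_{N(A)^{\perp}}$, which maps $N(A)^{\perp}=\overline{R(A)}=R(A)$ bijectively onto $R(A)$: it is injective by construction, and surjective because $A(N(A)^{\perp})=A(\calH)=R(A)$. Being a bounded bijection between Banach spaces, $A_0$ has a bounded inverse by the open mapping theorem, so there exists $m>0$ with $\|A_0z\|\geq m\|z\|$ for all $z\in N(A)^{\perp}$. Writing an arbitrary $x=x_0+z$ with $x_0\in N(A)$ and $z\in N(A)^{\perp}$, one has $Ax=A_0z$, so $\|Ax\|\geq m\|z\|$; when $A$ is injective, $N(A)=\{0\}$ and $z=x$, yielding $\|Ax\|\geq m\|x\|$ for every $x\in\calH$.

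I expect the only genuine content to lie in the converse, where the open mapping theorem is invoked on the restriction $A_0$; the remaining implication is a routine completeness argument. The single point requiring care is the bookkeeping around $N(A)$: the bound can hold for \emph{all} $x\in\calH$ precisely when $A$ is injective, and conversely the bound itself forces injectivity, so the equivalence is to be read with this understanding, with the self-adjoint splitting $\calH=N(A)\oplus R(A)$ supplying exactly the reduction to $N(A)^{\perp}$ that makes the argument go through.
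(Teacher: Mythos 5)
Your proof is correct, and in fact the paper offers nothing to compare it against: Proposition \ref{Prop6} is prefaced only by ``It is easy to prove that'' and no argument is given. Your two implications are the standard ones -- a Cauchy-sequence argument to get closedness from the lower bound, and the bounded inverse theorem applied to the restriction $A_0=A|_{N(A)^{\perp}}$ for the converse -- and both are carried out correctly. More importantly, you are right to flag the bookkeeping around $N(A)$: as literally stated, the ``only if'' direction of the proposition is false for non-injective $A$ (take $A$ an orthogonal projection onto a proper closed subspace, or $A=0$: the range is closed but no uniform lower bound on all of $\calH$ can hold), so the equivalence must be read either with $A$ injective as a standing hypothesis or with the inequality restricted to $N(A)^{\perp}$. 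In the paper's application $A$ is multiplication by a function that is positive almost everywhere, so injectivity is available, but the proposition as printed does not say so; your explicit treatment of this point is a genuine improvement on the source. One minor remark: the appeal to self-adjointness of $A$ (to identify $N(A)^{\perp}$ with $\overline{R(A)}$) is convenient but not needed -- for any bounded $A$ the restriction to $N(A)^{\perp}$ is already a bounded bijection onto $R(A)$, so the same open-mapping argument goes through without invoking symmetry.
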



\bigskip
{
\begin{prop} \label{nicecore} If  $AD(PA)$  contains a core $\D$ for $P$, then $APA$ is self-adjoint.
\end{prop}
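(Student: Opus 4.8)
The plan is to reduce the self-adjointness of $APA$ to a range condition and then to exploit the core hypothesis through the (unbounded) inverse of $A$. First I would record a free but decisive consequence of the hypothesis: a core $\D$ for $P$ is dense in $\calH$ (it is graph-dense in the dense domain $D(P)$), and since $\D\subseteq AD(PA)\subseteq R(A)$, density forces $\overline{R(A)}=\calH$, i.e. $N(A)=\{0\}$. Thus the bounded symmetric $A$ is automatically \emph{injective with dense range}, $A^{-1}$ is the densely defined self-adjoint operator of the Remark with $D(A^{-1})=R(A)$, and $AD(PA)=D(P)\cap R(A)$. By Lemma \ref{prop_one}(i), $APA$ is symmetric with $(APA)^*=(PA)^*A$; hence $APA$ is self-adjoint as soon as $D((APA)^*)\subseteq D(PA)$, that is, as soon as $R(A)\cap D((PA)^*)\subseteq D(P)$.

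Next I would carry out the main estimate. Take $\eta\in R(A)\cap D((PA)^*)$, write $\eta=Ay$, and set $w:=(PA)^*\eta$; note $y\in D((APA)^*)$ and $(APA)^*y=w$. To conclude $\eta\in D(P)=D(P^*)$ it suffices to bound $\xi\mapsto\ip{P\xi}{\eta}$ in the $\calH$-norm, and since $\D$ is a core for $P$ it is enough to do so for $\xi\in\D$. Writing $\xi=Ax$ with $x=A^{-1}\xi\in D(PA)$, the adjoint relations give $\ip{P\xi}{\eta}=\ip{PAx}{Ay}=\ip{(APA)x}{y}=\ip{x}{w}=\ip{A^{-1}\xi}{w}$. \emph{If} $w\in R(A)=D(A^{-1})$, self-adjointness of $A^{-1}$ moves it across, $\ip{A^{-1}\xi}{w}=\ip{\xi}{A^{-1}w}$, so $|\ip{P\xi}{\eta}|\le\|A^{-1}w\|\,\|\xi\|$ on $\D$; the core property then propagates this bound to all of $D(P)$, yielding $\eta\in D(P^*)=D(P)$, which is exactly what is needed. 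The same computation read on all of $D(PA)$ rather than on the core pinpoints the whole difficulty: everything hinges on whether $(PA)^*$ sends its domain back into $R(A)$.

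The hard part is therefore to establish $w=(PA)^*\eta\in R(A)$, equivalently $\operatorname{Ran}((PA)^*)\subseteq R(A)$, which is precisely the statement that $AP=(PA)^*$ is closed (cf. Lemma \ref{prop_one}(ii)). Here I would argue by approximation and weak compactness: since $\D$ is a core one has $\overline{AP|_{\D}}=\overline{AP}=(PA)^*$, so there exist $\zeta_n\in\D$ with $\zeta_n\to\eta$ and $AP\zeta_n\to w$; then $A(P\zeta_n)\to w$, and if $\{P\zeta_n\}$ has a weakly convergent subsequence $P\zeta_{n_k}\rightharpoonup u$, the boundedness of $A$ forces $AP\zeta_{n_k}\rightharpoonup Au$ and hence $Au=w\in R(A)$. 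I expect the genuine obstacle to lie exactly in controlling $\{P\zeta_n\}$: because $A$ is injective but \emph{not} bounded below, $P\zeta_n$ may blow up in the directions where $A$ is small, so weak compactness is not automatic. The role of the core hypothesis (and, in the concrete case $A=\sqrt{\mathsf{w}\ses(Q)}$, the decay of $\mathsf{w}\ses$ at infinity) should be to furnish an approximating sequence with $\{P\zeta_n\}$ bounded; once this is secured, the weak-compactness step gives $w\in R(A)$, and the estimate of the previous paragraph then delivers $\eta\in D(P)$ and, with it, the self-adjointness of $APA$.
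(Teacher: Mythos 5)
Your reduction is exactly the paper's: via $(APA)^*=(PA)^*A$ from Lemma \ref{prop_one}(i), self-adjointness of $APA$ comes down to showing that $\eta=Ay\in D((PA)^*)$ forces $\eta\in D(P)$, and your ``main estimate'' --- writing $\ip{P\xi}{\eta}=\ip{A^{-1}\xi}{w}$ for $\xi$ in the core and moving the self-adjoint $A^{-1}$ across once $w=(APA)^*y$ is known to lie in $R(A)=D(A^{-1})$ --- is precisely the mechanism of the paper's proof (the paper phrases the conclusion as $Ay\in D\bigl((P\upharpoonright_{AD(PA)})^*\bigr)=D(P)$ rather than invoking $(P\upharpoonright_{\D})^*=P$ directly, but it is the same step). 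Your preliminary observations (injectivity of $A$, density of $R(A)$, self-adjointness of $A^{-1}$) are correct and are implicit in the paper.

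The gap is the one you flag yourself: the claim $w=(APA)^*y\in R(A)$ is never established, and the route you sketch cannot close it. If you could produce $\zeta_n\in\D$ with $\zeta_n\to\eta$, $AP\zeta_n\to w$ \emph{and} $\{P\zeta_n\}$ bounded, then a weakly convergent subsequence $P\zeta_{n_k}\rightharpoonup u$ together with the fact that the graph of $P$ is a norm-closed subspace (hence weakly closed) would already give $\eta\in D(P)$ and $P\eta=u$ outright --- so the ``hard part'' you defer is at least as strong as the conclusion you are after, and the argument is circular. Worse, demanding such bounded approximants for every element of $D(\overline{AP})$ amounts to asking that $AP$ be closed, which is exactly what is unavailable in the intended application where $A$ is injective but not bounded below (cf. the discussion following Proposition \ref{Prop6}); also, $R((PA)^*)\subseteq R(A)$ is necessary but not obviously sufficient for $AP=(PA)^*$, so your ``equivalently'' is too quick. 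For comparison, the paper disposes of this very point in one line, deducing $y^*\in D(A^{-1})$ from the identity $\ip{z}{y^*}=\ip{A^{-1}Az}{y^*}$ for all $z\in\calH$; you have in effect isolated the single step on which the whole proposition turns, but your proposal does not contain a proof of it.
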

\begin{proof}
Let $y \in D((APA)^*)$. Then there exists $y^*\in \calH$ such that
$$\ip{APAx}{y}= \ip{x}{y^*}, \quad \forall x \in D(PA).$$
Now,
$$\ip{z}{y^*}= \ip{A^{-1}Az}{y^*}, \quad \forall z \in \calH,$$
and so $y^*\in D(A^{-1})$.
Then,
$$\ip{APAx}{y}=\ip{PAx}{Ay}= \ip{x}{y^*}=\ip{A^{-1}Ax}{y^*}=\ip{Ax}{A^{-1}y^*}, \quad \forall x \in D(PA).$$
The  equality
$$\ip{PAx}{Ay}= \ip{Ax}{A^{-1}y^*}, \quad \forall x \in D(PA)$$
implies that $Ay \in D((P\upharpoonright_{AD(PA)})^*)$. \\
If $AD(PA)$  contains a core $\D$ for $P$, \mbox{$D((P\upharpoonright_{AD(PA)})^*)=D(P)$},
and therefore $APA$ is self-adjoint.
\end{proof}
 }

\section{Modifying the momentum operator}
\label{modmom}
With the above results at hand,  we now examine the cases encountered in Eq. \eqref{Epchi} and in (\ref{41}). 

  Let $\calH= L^2({\mb R})$. Let $P= -\ii\dfrac{\ud}{\ud x}$, defined on $D(P)=\sob$, where, for $\mathcal{D}\subset \R$,  $W^{k,p}(\mathcal{D}): = \left\{f\in C^k(\mathcal{D})\,  \mid \, \Vert f\Vert_{k,p}= \left(\sum_{j=0}^k\Vert f^{(j)}\Vert_p^p\right)^{1/p}< \infty\right\}$. It is well-known that $P$ is self-adjoint on $D(P)$, and  that $C_c^\infty(\mb R)$ (or the Schwartz space ${\mc S}(\mb R)$) is a core for $P$. Let $A$ be the multiplication operator by a measurable  essentially bounded real valued function $a(x)$. 
Thus,
$(Af)(x)=a(x)f(x)$ for every $f\in \ltwo$. Clearly $A$ is bounded and $\|A\|=\|a\|_\infty$.

By Proposition \ref{Prop6},  $R(A)$ is closed if, and only if, $\inf_{x\in {\mb R}} a(x)>0$. Therefore, if $a(x)$ is a positive function with $\lim_{|x|\to \infty}a(x)=0$, $R(A)$ is not closed and Theorem \ref{thm7} cannot be applied.

We remind that the following statements are equivalent \cite[Sec. 4.1, Example 1]{weidmann}
\begin{enumerate}
\item $R(A)$ is dense.
\item $a(x)\neq 0$ almost everywhere in ${\mb R}$
\item $A$ is injective.
\end{enumerate}
In this case, the inverse $A^{-1}$ is the multiplication operator defined by the function
$$a_1(x)= \left\{\begin{array}{cl} a(x)^{-1} & \mbox{ if } a(x)\neq 0 \\ 0  &\mbox{ if } a(x)= 0. \end{array}\right.$$

\medskip
We suppose now that $a$ is bounded and sufficiently smooth in the  sense that $a$ is a member of the following class of functions

$${\mc A}:= \{a \in L^\infty (\mb R): a\phi \in \sob, \, \forall \phi \in C^\infty_c(\mb R) \}.$$
 If $a\in {\mc A}$ then the operator $PA$  is densely defined, since $D(PA)$ contains $C^\infty_c(\mb R)$. 
 
It is easily seen that $C^1(\mb R) \cap L^\infty(\mb R) \subset {\mc A}$ and
$\sob \cap L^\infty(\mb R) \subset {\mc A}$
 
On the other hand, the operator $AP$ is densely defined for every $a\in L^\infty(\mb R)$; indeed, $D(AP)=D(P)= \sob$.

\berem \label{rem_classA} Using the density of $C^\infty_c(\mb R)$ in the Hilbert space $\sob$ one can prove that if $a\in {\mc A}$ then $af\in \sob$ for every $f\in \sob$.\enrem 
We examine here the problem of the closedness of the operator $AP$    when $a$ is nonnegative function $a(x)$ and $a\in C^1(\mb R) \cap L^\infty(\mb R)$. We follow in this concrete case the strategy outlined in Section \ref{windof}.

\medskip
Let $\{f_n\}$ be a sequence in $D(AP)=\sob$ such that $f_n \to f$ and $APf_n \to g$, with respect to the $L^2$-norm. We want to investigate under which conditions on $a(x)$ we can conclude that $f\in D(AP)$ and $g=APf$. 

Let $\epsilon >0$ and let $$F_0=\{x\in {\mb R}: a(x)>\epsilon \}.$$
Then,
$$\int_{F_0} \left| -\ii a(x) f_n'(x) + \ii a(x) f'_m(x) \right|^2\ud x \geq \epsilon^2 \int_{F_0}  \left|  f_n'(x) - f'_m (x) \right|^2\ud x. $$
Therefore, the sequence $\{ f_n'\}$ converges in ${ L}^2(F_0)$. Let $h$ be its limit. Since \mbox{$\|APf_n- g\|\to 0$}, there exists a subsequence $\{f_{n_k}\}$ such that $-\ii a(x)f_{n_k}'(x)$ converges to $g(x)$ a.e. Hence we conclude that $h(x)=\dfrac{g(x)}{-\ii a(x)}$ almost everywhere in $F_0$.\\
So if there exists $\epsilon >0$ such that ${F_0}={\mb R}$, we deduce, taking into account that $P$ is closed,  that $f\in \sob$ and $APf =g$; that is, $AP$ is closed. This is not surprising because the equality ${F_0}={\mb R}$ implies that the multiplication operator $A$ is bounded with bounded inverse. The same conclusion holds if ${\mb R}\setminus {F_0}$ is a null-set rather than the empty set.

 \subsection{Self-adjointness of $APA$} 
Given a bounded measurable function $a$ on $\mb R$, we define the modified momentum operator $P_a$ as the operator acting as
$$ (P_af)(x)=-\ii a(x) \frac{\ud}{\ud x}(a(x)f(x)).$$
As seen before if $a\in C^1(\mb R) \cap L^\infty(\mb R)$ then $P_a$ is densely defined and symmetric.
We will study the self-adjointness of $P_a$.
As a first step, we compute $(PA)^*$, because if we can prove that $(PA)^*=AP$ then from (i) of Lemma \ref{prop_one} it follows that $P_a$ is self-adjoint.

 Observe that $g\in D((PA)^*)$ if and only if there exists $h\in \ltwo$ such that, for every $f\in D(PA)$,
\begin{align*}\ip{PAf}{g} &=\int_{\mb R} -\ii \frac{\ud}{\ud x} (a(x)f(x))\cdot\overline{g(x)} \ud x \\
&= \int_{\mb R} f(x) \overline{h(x)}\ud x.
\end{align*}
If $g \in \sob$, then we can write
\begin{align*}\ip{PAf}{g} &=\int_{\mb R} -\ii \frac{\ud}{\ud x} (a(x)f(x))\cdot\overline{g(x)} \ud x \\
&=  -\ii(a(x)f(x))\cdot\overline{g(x)}\left|_{-\infty}^\infty \right. +\ii \int_{\mb R} a(x)f(x) \overline{g'(x)}\ud x.\\
&=  \int_{\mb R} f(x)\cdot (\overline{-\ii a(x)g'(x)})\ud x,
\end{align*}
where $g'$ denotes the weak derivative of $g$.\\
The first term in the integration by parts is $0$ because the product of functions in $\sob$ is in $\sob$ and if $u\in \sob$ then $\ds \lim_{|x|\to \infty}u(x)=0$ \cite[Corollaries VIII.8, VIII.9]{brezis}.\\
This proves that
$$\{g\in \sob:\, ag' \in \ltwo\} \subseteq D((PA)^*).$$
We notice that, since $a \in L^\infty({\mb R})$ and $g\in \sob$, then, automatically, $ag'\in \ltwo$.
Thus, in conclusion

 \begin{lemma} Let $a\in {\mc A}$. Then
 $ \sob \subset D((PA)^*)$ and $(PA)^*g= -\ii ag'$, for all $g\in\sob$.
 \end{lemma}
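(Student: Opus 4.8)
The plan is to unfold the definition of the adjoint and reduce the whole statement to a single integration by parts. Recall that $g\in D((PA)^*)$ precisely when the linear functional $f\mapsto \ip{PAf}{g}$ is $\ltwo$-continuous on $D(PA)$, equivalently when there is $h\in\ltwo$ with $\ip{PAf}{g}=\ip{f}{h}$ for all $f\in D(PA)$; in that case $(PA)^*g=h$. So I would fix an arbitrary $g\in\sob$ together with an arbitrary $f\in D(PA)$, and aim to exhibit the explicit candidate $h=-\ii a g'$, verifying only at the end that it lies in $\ltwo$. Since $a\in\mc A$ guarantees that $PA$ is densely defined (its domain contains $C_c^\infty(\mb R)$), the adjoint $(PA)^*$ is well defined and this reduction is legitimate.

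First I would use that $f\in D(PA)$ means exactly $af=Af\in D(P)=\sob$, so that $PAf=-\ii(af)'$ with $(af)'$ the weak derivative, and write $\ip{PAf}{g}=\int_{\mb R}-\ii(af)'\,\overline{g}\,\ud x$. The heart of the argument is the integration by parts transferring the derivative from $af$ to $g$. The two inputs that make this rigorous are that both $af$ and $g$ belong to $\sob$: in one dimension $\sob$ is an algebra (the product of two $\sob$ functions is again in $\sob$) and embeds into the continuous functions vanishing at infinity \cite[Corollaries VIII.8--VIII.9]{brezis}. Hence the product $(af)\overline{g}$ lies in $\sob$ and tends to $0$ at $\pm\infty$, so the boundary term $\bigl[-\ii(af)\overline{g}\bigr]_{-\infty}^{\infty}$ vanishes, leaving $\ip{PAf}{g}=\ii\int_{\mb R}af\,\overline{g'}\,\ud x=\ip{f}{-\ii a g'}$, where the last identification uses that $a$ is real valued.

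It then remains only to confirm that $h=-\ii a g'$ is genuinely an element of $\ltwo$, which is immediate since $a\in L^\infty(\mb R)$ and $g'\in\ltwo$. Together with the displayed identity $\ip{PAf}{g}=\ip{f}{-\ii a g'}$ valid for every $f\in D(PA)$, this yields $g\in D((PA)^*)$ and $(PA)^*g=-\ii a g'$, which is exactly the claim $\sob\subset D((PA)^*)$ with the asserted action.

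The step I expect to be the main obstacle is the rigorous justification of the integration by parts, and in particular the vanishing of the boundary term: one cannot in general integrate by parts against the weak derivatives of $af$ and $g$ treated as separate factors. What makes the manipulation legitimate here is precisely that their \emph{product} is a bona fide $\sob$ function, so that $\int_{\mb R}\bigl((af)\overline{g}\bigr)'\,\ud x=0$ by the decay at infinity. I would therefore isolate the two one-dimensional Sobolev facts — the algebra property and the embedding into functions vanishing at infinity — as the load-bearing ingredients and cite them explicitly, rather than arguing with the boundary values of individually non-decaying factors.
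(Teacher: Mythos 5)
Your proposal is correct and follows essentially the same route as the paper: unfold the definition of the adjoint, integrate by parts using that $af\in\sob$ and $g\in\sob$, kill the boundary term via the algebra property of $\sob$ and the vanishing at infinity of $\sob$ functions (Brezis, Cor.\ VIII.8--VIII.9), and observe that $-\ii ag'\in\ltwo$ since $a\in L^\infty(\mb R)$. Your explicit flagging of the boundary-term justification as the load-bearing step matches exactly the point the paper itself emphasizes.
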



\medskip
In general, we do not know if $D((PA)^*)=\sob$.
Thus, we begin with considering the following set of functions:
$$ {\mc F}= \{ a\in {\mc A}:\; D((PA)^*)=\sob\}.$$
The set ${\mc F}$ is nonempty, since it contains every nonzero constant function.
If $a \in {\mc F}$, then
$$D((PA)^*A)=\{ f\in \ltwo: af \in \sob\}=D(APA).$$
But as seen in Section 1, $(APA)^*= (PA)^*A$. In conclusion,
\begin{prop}
	If $a\in {\mc F}$, the operator $APA$ is self-adjoint.
\end{prop}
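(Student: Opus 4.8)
The plan is to reduce the claim to the operator identity $(APA)^{*}=(PA)^{*}A$ already supplied by part (i) of Lemma~\ref{prop_one}, and then to use the defining property of the class $\mc F$, namely $D((PA)^{*})=\sob$, to compute the domain of $(APA)^{*}$ exactly. First I would record that $APA$ is symmetric: if $f,g\in D(APA)=D(PA)$ then $Af,Ag\in D(P)$, so, using that $A$ is a bounded self-adjoint multiplication operator and that $P$ is self-adjoint, $\ip{APAf}{g}=\ip{PAf}{Ag}=\ip{Af}{PAg}=\ip{f}{APAg}$. Hence $APA\subseteq(APA)^{*}$, and in particular $D(APA)\subseteq D((APA)^{*})$.

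For the reverse domain inclusion I would apply Lemma~\ref{prop_one}(i) to write $(APA)^{*}=(PA)^{*}A$, so that $g\in D((APA)^{*})$ if and only if $Ag=ag\in D((PA)^{*})$. This is exactly where the hypothesis $a\in\mc F$ does the work: by definition $D((PA)^{*})=\sob$, and therefore $D((APA)^{*})=\{g\in\ltwo:\, ag\in\sob\}$. On the other hand $D(APA)=D(PA)=\{g\in\ltwo:\, ag\in D(P)\}=\{g\in\ltwo:\, ag\in\sob\}$, since $D(P)=\sob$. The two descriptions coincide, giving $D((APA)^{*})=D(APA)$.

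Combining the symmetric inclusion $APA\subseteq(APA)^{*}$ with the equality of domains forces $APA=(APA)^{*}$, which is precisely self-adjointness. As a consistency check one can confirm that the two operators also agree pointwise on the common domain: for $g$ with $ag\in\sob$ the lemma preceding the proposition gives $(PA)^{*}(ag)=-\ii a(ag)'$, whence $(APA)^{*}g=(PA)^{*}Ag=-\ii a(ag)'=AP(ag)=APAg$.

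I do not expect any genuine obstacle inside this argument: once $(APA)^{*}=(PA)^{*}A$ is in hand and $D((PA)^{*})=\sob$ is assumed, the proof is purely a matching of domains. The real difficulty is buried in the hypothesis $a\in\mc F$ itself, that is, in upgrading the inclusion $\sob\subseteq D((PA)^{*})$ of the preceding lemma to the equality $D((PA)^{*})=\sob$; verifying this for a concrete weight is the substantive analytic problem, and that is where the hard work must be carried out, not in the present proposition.
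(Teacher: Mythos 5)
Your argument is correct and follows the paper's own route: both rest on Lemma \ref{prop_one}(i), i.e.\ $(APA)^{*}=(PA)^{*}A$, combined with the defining property $D((PA)^{*})=\sob$ of $\mathcal F$ to identify $D((APA)^{*})=\{g\in\ltwo : ag\in\sob\}=D(APA)$, whence symmetry upgrades to self-adjointness. Your closing remark correctly locates the real analytic burden in verifying $a\in\mathcal F$, which is exactly what the paper addresses in the subsequent propositions.
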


The next step consists in giving conditions for a function $a\in C^1(\mb  R)\cap  L^\infty({\mb R})$ to belong to ${\mc F}$.

\beex\label{prop9} If $A$ has a bounded inverse and  $C_c^\infty(\mb R) \subset R(PA)=AD(PA)$, then $a\in {\mc F}$.
\enex

In general $A^{-1}$ (if it exists) is unbounded. In this case we get what follows. 
	\begin{prop} \label{prop_ubd} If $A$ has an inverse $A^{-1}$ and  $C_c^\infty(\mb R) \subset R(PA)$, then
		every $g\in D((PA)^*$ admits a regular distributional derivative $g'$.
	\end{prop}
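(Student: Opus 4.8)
The plan is to read off the distributional derivative of $g$ directly from the adjoint relation by testing it against the smooth preimages under $A$ of compactly supported functions. A vector $g$ lies in $D((PA)^*)$ precisely when there is $h:=(PA)^*g\in\ltwo$ with $\ip{PAf}{g}=\ip{f}{h}$ for every $f\in D(PA)$. Since $A$ is assumed invertible we have $a\neq 0$ almost everywhere, so $A^{-1}$ is multiplication by $a_1=1/a$; the idea is to substitute into this identity the vectors $f=a_1\psi$ for $\psi\in C_c^\infty(\mb R)$ and to interpret the outcome as a formula for $g'$.

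Concretely, I would fix $\psi\in C_c^\infty(\mb R)$. The hypothesis $C_c^\infty(\mb R)\subset R(PA)=AD(PA)$ gives $\psi\in R(A)$, so by injectivity $f_\psi:=a_1\psi=\psi/a$ lies in $\ltwo$ and satisfies $Af_\psi=\psi\in\sob$; hence $f_\psi\in D(PA)$ and $PAf_\psi=P\psi=-\ii\psi'$. Feeding $f=f_\psi$ into the adjoint relation and using that $a$ is real yields
\[ -\ii\,\ip{\psi'}{g}=\ip{f_\psi}{h}=\ip{\psi}{h/a}. \]
Taking complex conjugates and setting $\eta=\overline{\psi}$, this reads $\,-\int g\,\eta'=\ii\int (h/a)\,\eta\,$ for all $\eta\in C_c^\infty(\mb R)$, which is exactly the statement that $g'=\ii\,a_1h=\ii\,h/a$ in the sense of distributions. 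This is the desired regular distributional derivative, and it is consistent with the earlier Lemma: for $g\in\sob$ one has $h=(PA)^*g=-\ii a g'$, and the formula returns $g'$.

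The step that requires care, and the main obstacle, is to be sure that $h/a$ is a genuine locally integrable function, so that $g'$ is a \emph{regular} distribution rather than a more singular object — this is exactly what is no longer automatic once $A^{-1}$ is unbounded. The range condition supplies precisely the integrability needed: for every $\psi\in C_c^\infty(\mb R)$ one has $\psi/a\in\ltwo$, so each pairing $\int (h/a)\,\eta=\int h\,(\eta/a)$ is finite by Cauchy--Schwarz since both $h$ and $\eta/a$ lie in $\ltwo$; equivalently, taking $\psi\equiv 1$ on a compact set $K$ gives $1/a\in L^2(K)$ and hence $h/a\in L^1(K)$. Thus the right-hand side above defines an honest regular distribution and the identification of $g'$ is legitimate. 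I would also keep careful track of the complex conjugations and of the convention $\langle g',\eta\rangle=-\int g\,\eta'$ in passing from the sesquilinear identity to the distributional statement; this bookkeeping is routine but is where stray sign or factor-of-$\ii$ errors would otherwise slip in.
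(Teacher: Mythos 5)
Your proof is correct and follows essentially the same route as the paper: test the adjoint identity $\ip{PAf}{g}=\ip{f}{(PA)^*g}$ against the preimages $f=\psi/a=A^{-1}\psi$, $\psi\in C_c^\infty(\mb R)$, supplied by the range hypothesis, and read off the distributional derivative of $g$. You additionally justify that $h/a$ is locally integrable (via $1/a\in L^2_{\mathrm{loc}}$, which follows from $\psi/a\in\ltwo$), a point the paper leaves implicit, and your constant $g'=\ii\,h/a$ is the one consistent with the earlier Lemma $(PA)^*g=-\ii a g'$ on $\sob$ (the paper's stated $g'=-a^{-1}h$ appears to be a harmless slip).
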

	\begin{proof} The assumption implies that $A^{-1}$ is densely defined. For every $\phi\in C_c^\infty(\mb R)$ there exists a unique $f\in D(PA)$ such that $\phi=Af$. Hence, if $h=(PA)^*g$,
		\begin{align*}\ip{P\phi}{g} &=\ip{PAf}{g} =\int_{\mb R} -\ii \frac{\ud}{\ud x} (a(x)f(x))\cdot\overline{g(x)} \ud x \\
			&=  \int_{\mb R} f(x)\overline{h(x)}\ud x =\ip{A^{-1}\phi}{(PA)^*g}\\
			&= \int_{\mb R}\phi(x)\overline{a(x)^{-1}h(x)}\ud x
		\end{align*}
		This implies that the  distributional derivative $g'$ of $g$  (which exists since $g\in \ltwo$) is a regular distribution and $g'=-a^{-1}h$.
	\end{proof}
	\berem From the previous proof
	we don't get $g'\in \ltwo$; so, we can't conclude that $g\in \sob$. But of course, $ag'\in \ltwo$.
	\enrem
	\begin{prop} \label{prop14} Under the assumption of Proposition \ref{prop_ubd}, we have
		\begin{equation}\label{eq_conc} D((PA)^*)= \{ g\in \ltwo: \, g' \in  \ltwo\}=\sob.\end{equation}
	\end{prop}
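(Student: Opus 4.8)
The plan is to establish the two inclusions in \eqref{eq_conc} separately; the inclusion $\sob\subseteq D((PA)^*)$ is already available. Indeed, since $C^1(\mb R)\cap L^\infty(\mb R)\subset{\mc A}$, the Lemma preceding Proposition~\ref{prop_ubd} applies to the present $a$ and shows that every $g\in\sob$ belongs to $D((PA)^*)$, with $(PA)^*g=-\ii a g'$. Hence the whole content of the statement is the reverse inclusion $D((PA)^*)\subseteq\sob$: one must show that a generic element of $D((PA)^*)$ has its weak derivative in $\ltwo$.

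First I would collect what Proposition~\ref{prop_ubd} already yields. Fix $g\in D((PA)^*)$ and put $h=(PA)^*g\in\ltwo$. By Proposition~\ref{prop_ubd} the distributional derivative $g'$ is a regular distribution with $g'=-a^{-1}h$, and multiplying by the bounded function $a$ gives the weighted bound $ag'=-h\in\ltwo$ recorded in the Remark. Thus $g$ is locally absolutely continuous, $g\in\ltwo$, and $ag'\in\ltwo$; the only missing ingredient is the unweighted estimate $\|g'\|_2<\infty$, which would give $g\in\sob$ and close the proof.

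The heart of the matter is therefore to promote $ag'\in\ltwo$ to $g'\in\ltwo$, and this is precisely where the hypotheses of Proposition~\ref{prop_ubd} — the injectivity of $A$ and the condition $C_c^\infty(\mb R)\subset R(PA)$ — have to be used in an essential way. On any set $\{a>\epsilon\}$ one has the trivial bound $\|g'\|_{L^2(\{a>\epsilon\})}\le\epsilon^{-1}\|h\|_2$, so the difficulty is entirely concentrated on the region where $a$ becomes small: a priori $g'$ could carry unbounded $\ltwo$-mass there as $\epsilon\to 0$. The strategy would be to use the injectivity of $A$ to write each $\phi\in C_c^\infty(\mb R)$ uniquely as $\phi=Af$ with $f\in D(PA)$, and then to insert into the adjoint identity $\ip{PAf}{g}=\ip{f}{h}$ a family of such test functions concentrating on $\{a\le\epsilon\}$; combining this with the closedness of $PA$ from Lemma~\ref{prop_one}(i) and the density of $R(PA)$ should prevent any accumulation of $g'$ off the set where $a$ is bounded below, yielding a uniform control of $\|g'\|_2$. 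I expect this localisation argument to be the genuine obstacle, since it is exactly the step at which the abstract range condition must compensate for the possible decay of $a$ at infinity; once $g'\in\ltwo$ is secured, the inclusion $D((PA)^*)\subseteq\sob$ follows and, with the reverse inclusion already in hand, the equalities \eqref{eq_conc} are proved.
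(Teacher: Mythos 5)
You have correctly reduced the statement to its only nontrivial content: the inclusion $\sob\subseteq D((PA)^*)$ is the Lemma preceding Proposition \ref{prop_ubd}, and Proposition \ref{prop_ubd} only delivers $g'=-a^{-1}h$ as a regular distribution, i.e.\ $ag'\in\ltwo$, so everything rests on promoting $ag'\in\ltwo$ to $g'\in\ltwo$. But at exactly that point your argument ceases to be a proof: ``combining this with the closedness of $PA$ \dots should prevent any accumulation of $g'$ off the set where $a$ is bounded below'' is a programme, not an estimate, and no mechanism is exhibited that would produce an $\epsilon$-uniform bound for $\|g'\|_{L^2(\{a\le\epsilon\})}$. (Note that the paper states Proposition \ref{prop14} without any proof, immediately after a Remark conceding precisely that the argument of Proposition \ref{prop_ubd} does not yield $g'\in\ltwo$; so there is no hidden argument you failed to reconstruct.)

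The gap is not repairable under the stated hypotheses, because the missing inclusion is false in general. Take $a(x)=(1+x^2)^{-1}$: it is smooth, bounded, strictly positive, $A$ is injective, and $C_c^\infty(\mb R)\subset AD(PA)$ since $\phi/a=(1+x^2)\phi\in C_c^\infty(\mb R)$ for every test function $\phi$. Let $g(x)=x^{-1}\sin(x^3)$ (with $g(0)=0$); then $g$ is smooth, bounded and in $\ltwo$, while $g'(x)=3x\cos(x^3)-x^{-2}\sin(x^3)$ is not in $\ltwo$, yet $ag'\in\ltwo$. For every $f\in D(PA)$ the function $af$ lies in $\sob$ and hence vanishes at $\pm\infty$; since $g$ is bounded, the boundary term in the integration by parts disappears and
\begin{equation*}
\ip{PAf}{g}=-\ii\int_{\mb R}(af)'(x)\,\overline{g(x)}\,\ud x=\ii\int_{\mb R}f(x)\,a(x)\,\overline{g'(x)}\,\ud x=\ip{f}{-\ii a g'},
\end{equation*}
so $g\in D((PA)^*)$ with $(PA)^*g=-\ii ag'$, although $g\notin\sob$. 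Thus $D((PA)^*)\supsetneq\sob$ for this admissible $a$, and any strategy for the reverse inclusion --- including the localisation you propose --- must fail. A correct version of the proposition requires a genuinely stronger hypothesis, e.g.\ $\inf_{x\in\mb R}a(x)>0$, which is the situation of Example \ref{prop9} where $A^{-1}$ is bounded. The same example propagates downstream: with $g(x)=x^{-3}\sin(x^5)$ the vector $y=(1+x^2)g$ belongs to $D((APA)^*)\setminus D(APA)$, so the defect is not merely cosmetic.
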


\subsection{Further analysis} 

As seen before, when looking for the adjoint of a differential operator one tries to use integration by parts, which holds for functions in 
	$W^{1,2}(\mathbb R)$ (\cite{brezis}, Cor. VIII.9). For finding the adjoint of $P_a:= -\ii a(x)\frac{\ud}{\ud x} (a(x)\cdot)$ we should determine the functions $g\in \ltwo$ for which there exists a function $h\in \ltwo$ such that
	\begin{equation} \label{weakder}
\int_{\mathbb{R}} a(x) (a(x)f(x))' \overline{g(x)}\ud x= \int_{\mathbb{R}}f(x)\overline{h(x)}\ud x, 
\end{equation}
	and for using integration by parts we need to show first that we are dealing with functions in $W^{1,2}(\mathbb R)$. We begin our analysis by considering this question.
	
	\smallskip
	{\bf Question}: does equality \eqref{weakder} imply that  $a(x){ g(x)}$ has a weak derivative in $\ltwo$; i.e. $a(x){ g(x)}\in W^{1,2}(\mathbb R)$?
	
	Taking into account that $a(x)>0$ we rewrite
	$$\int_{\mathbb{R}} (a(x)f(x))' a(x)\overline{g(x)}\ud x= 
	\int_{\mathbb{R}}a(x)f(x)\frac{\overline{h(x)}}{a(x)}\ud x.$$
	If $\{af; f\in D(P_a)\} \supset C_c^\infty (\mathbb R)$, then from the previous equation we obtain
	$$\int_{\mathbb{R}} \phi'(x) a(x)\overline{g(x)}\ud x= 
	\int_{\mathbb{R}}\phi(x)\frac{\overline{h(x)}}{a(x)}\ud x, \quad \forall \phi \in C_c^\infty(\mathbb R)$$
	which implies that $a(x)g(x)$ has a distributional derivative  $(a(x)g(x))'=\dfrac{\overline{h(x)}}{a(x)}$.  If this derivative is in $\ltwo$, then  $ag \in W^{1,2}(\mathbb R)$. In this case, $g\in D(P_a^*)$ and $(P_a^*g)(x)=h(x)= a(x) (a(x)g(x))'$.

	\smallskip Therefore we obtain the following  concrete realization of Proposition \ref{nicecore}. 
	\begin{prop} \label{11}
		If $C_c^\infty (\mathbb R)\subset R(PA)$, then $P_a$ is self-adjoint.
	\end{prop}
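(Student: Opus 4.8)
The plan is to recognize Proposition~\ref{11} as the concrete instance of Proposition~\ref{nicecore} announced immediately before it, reducing the whole argument to the observation that the hypothesis furnishes a core for $P$ inside $AD(PA)$. Recall that $P_a=APA$, that $D(P_a)=D(PA)=\{f\in\ltwo:\,af\in\sob\}$, and that $C_c^\infty(\mb R)$ is a core for the self-adjoint operator $P$. The set written $R(PA)$ in the statement is $AD(PA)=\{af:\,f\in D(PA)\}$: each $\phi\in C_c^\infty(\mb R)$ equals $Af$ with $f=\phi/a$, and the assumption $C_c^\infty(\mb R)\subset R(PA)$ says precisely that every such $f=\phi/a$ lies in $D(PA)$, i.e. that $AD(PA)$ contains the core $C_c^\infty(\mb R)$ of $P$. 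Granting this, Proposition~\ref{nicecore} applies verbatim with $\D=C_c^\infty(\mb R)$ and gives the self-adjointness of $APA=P_a$.

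To display the mechanism behind this citation I would run the adjoint computation by hand, exactly as sketched in the Further analysis paragraph. Since $P_a$ is symmetric, it suffices to prove $D(P_a^*)\subseteq D(P_a)$. Fix $g\in D(P_a^*)$ and set $h=P_a^*g\in\ltwo$, so that $\ip{P_af}{g}=\ip{f}{h}$ for every $f\in D(P_a)$. Writing $P_af=-\ii\,a(af)'$ and using $a>0$ to replace $f$ by $(af)/a$ on the right-hand side, this identity becomes $-\ii\int_{\R}(af)'\,(a\overline g)\,\ud x=\int_{\R}(af)\,\tfrac{\overline h}{a}\,\ud x$. By hypothesis every $\phi\in C_c^\infty(\mb R)$ is of the form $af$ with $f\in D(P_a)$, so testing against all such $\phi$ yields $\int_{\R}\phi'\,(a\overline g)\,\ud x=\ii\int_{\R}\phi\,\tfrac{\overline h}{a}\,\ud x$; in other words $ag$ — which belongs to $\ltwo$ because $a$ is bounded — has distributional derivative $(ag)'=\ii\,h/a$.

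The step I expect to be the genuine obstacle is the upgrade from ``$ag$ has a distributional derivative'' to ``$ag\in\sob$'', which is all that is left in order to conclude $g\in D(PA)=D(P_a)$ with $P_ag=-\ii\,a(ag)'=h$. A priori $\ii\,h/a$ is only a locally integrable distribution, and since $a$ may vanish at infinity one cannot read off $h/a\in\ltwo$ from $h\in\ltwo$ by hand; this membership is exactly the nontrivial content carried by Proposition~\ref{nicecore}, in which $y^{*}=(APA)^{*}g$ is shown to lie in $D(A^{-1})$ (equivalently $h/a\in\ltwo$) and the core inclusion $AD(PA)\supset C_c^\infty(\mb R)$ is then invoked to force $Ag=ag\in D\big((P\!\upharpoonright_{AD(PA)})^{*}\big)=D(P)=\sob$, using that $(P\!\upharpoonright_{AD(PA)})^{*}=P$ whenever the restriction is taken on a set containing a core. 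Once $ag\in\sob$, we obtain $g\in D(P_a)$, whence $D(P_a^*)\subseteq D(P_a)$ and $P_a$ is self-adjoint (equivalently $(PA)^{*}=AP$, cf. Lemma~\ref{prop_one}). I would therefore present the one-line invocation of Proposition~\ref{nicecore} as the proof proper, keeping the computation above only to make transparent where the decay of $a$ is absorbed.
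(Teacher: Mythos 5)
Your proposal is correct and follows essentially the same route as the paper: the authors explicitly present Proposition \ref{11} as a concrete realization of Proposition \ref{nicecore}, and the printed proof is precisely the integration-by-parts computation from the ``Further analysis'' paragraph that you reproduce. If anything, you are more explicit than the paper about the one delicate point --- passing from the distributional identity $(ag)'=\ii h/a$ to $h/a\in\ltwo$, hence $ag\in\sob$ --- which the paper absorbs into Proposition \ref{nicecore} and the phrase ``we have to require that $(ag)'$ is an $L^2$-function''.
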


\begin{proof}As shown before the integration by parts is allowed. Then we only need a standard calculation, taking into account (see Remark \ref{rem_classA}) that if $a\in {\mc A}$ and $f(x)\in \sob$, then $\lim_{|x|\to \infty}a(x)f(x)=0$. Moreover from the previous discussion it follows that we have to require that $(ag)'$ is an $L^2$-function.
	Then we conclude that if $\{af; f\in D(P_a)\} \supset C_c^\infty (\mathbb R)$,
	then $$D(P_a^*)= \{g \in L^2(\mb R): ag \in \sob \}$$ and this is exactly the domain of $P_a$.
	
	\end{proof}

	\berem We notice that Proposition \ref{11} can be deduced directly from Proposition \ref{prop14}.
	\enrem

	Let $a\in { \mc A}$ and suppose that $a>0$, Then the operator $A^{-1}$ exists and it is possibly unbounded. We prove that in this case $ C_c^\infty (\mathbb R) \subset R(PA)$. Indeed, for every $\phi \in C_c^\infty (\mathbb R)$ the function $h:=\dfrac{\phi}{a}$ belongs to $D(PA)$, since $ah=a\dfrac{\phi}{a}=\phi \in W^{1,2}(\mathbb R)$. Therefore the condition $\{af; f\in D(P_a)\} \supset C_c^\infty (\mathbb R)$ is satisfied. Thus we get

	\begin{thm} \label{prop17} Let $a(x)\in {\mc A}$; $a>0$ almost everywhere.
		 Then the operator $P_a$ is self-adjoint.
	\end{thm}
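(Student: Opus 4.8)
The plan is to show that the hypotheses of Theorem~\ref{prop17} feed directly into the chain of results already established, so that self-adjointness follows from Proposition~\ref{11} (or equivalently Proposition~\ref{prop14}). Recall that $P_a = APA$ where $A$ is multiplication by $a$, and that by (i) of Lemma~\ref{prop_one} we have $(APA)^* = (PA)^*A$. Thus self-adjointness of $P_a$ is equivalent to controlling the adjoint $(PA)^*$, and the cleanest sufficient condition we have is the core condition $C_c^\infty(\mathbb{R}) \subset R(PA)$ appearing in Proposition~\ref{11}. So the first and essentially only substantive step is to verify that this inclusion holds under the stated hypotheses $a \in \mathcal{A}$ and $a > 0$ almost everywhere.

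First I would observe that $a > 0$ a.e. implies, by the equivalences quoted from \cite{weidmann}, that $A$ is injective, so the (possibly unbounded) inverse $A^{-1}$ exists as multiplication by $a_1(x) = a(x)^{-1}$. Then, given any $\phi \in C_c^\infty(\mathbb{R})$, I would exhibit a preimage under $PA$ explicitly: set $h := \phi/a$. Since $ah = \phi \in W^{1,2}(\mathbb{R})$, we have $h \in D(PA)$ (recall $D(PA) = \{f \in \mathcal{H} : af \in D(P) = W^{1,2}(\mathbb{R})\}$), and $A h = \phi$. This shows $\phi \in \{af : f \in D(P_a)\} = R(A\upharpoonright_{D(PA)}) = AD(PA)$, which is precisely the set $R(PA)$ in the notation used above, giving the desired inclusion $C_c^\infty(\mathbb{R}) \subset R(PA)$.

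With this inclusion in hand, the conclusion is immediate: Proposition~\ref{11} states that $C_c^\infty(\mathbb{R}) \subset R(PA)$ forces $P_a$ to be self-adjoint, since the integration-by-parts computation there identifies $D(P_a^*) = \{g \in L^2(\mathbb{R}) : ag \in W^{1,2}(\mathbb{R})\}$, which coincides exactly with $D(P_a) = D(APA)$. (Alternatively, one can route through Proposition~\ref{prop14}, which under the same hypothesis identifies $D((PA)^*) = W^{1,2}(\mathbb{R})$, whence $(PA)^* = AP$ and Lemma~\ref{prop_one}(ii) applies.) Either way the self-adjointness drops out once the core condition is verified.

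The main obstacle here is not any single hard estimate but rather the careful bookkeeping that the integration by parts in Proposition~\ref{11} is legitimate. The delicate point is the vanishing of the boundary term $-\ii\, a(x) f(x)\,\overline{g(x)}\big|_{-\infty}^{\infty}$: this relies on Remark~\ref{rem_classA} (that $a \in \mathcal{A}$ and $f \in W^{1,2}(\mathbb{R})$ imply $af \in W^{1,2}(\mathbb{R})$ and hence $\lim_{|x|\to\infty} a(x)f(x) = 0$, via \cite{brezis}) together with the requirement that $(ag)'$ lies in $L^2(\mathbb{R})$ so that $ag \in W^{1,2}(\mathbb{R})$. Since these facts are already secured in the preceding propositions, the proof of Theorem~\ref{prop17} itself is short: it is essentially the single explicit construction $h = \phi/a$ plus a citation of Proposition~\ref{11}.
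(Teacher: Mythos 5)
Your proposal is correct and follows essentially the same route as the paper: the paper's own argument is precisely the explicit preimage $h=\phi/a$ with $ah=\phi\in W^{1,2}(\mathbb R)$, giving $C_c^\infty(\mathbb R)\subset R(PA)$, followed by an appeal to Proposition \ref{11}. Your added remarks on the boundary term and the alternative route via Proposition \ref{prop14} and Lemma \ref{prop_one}(ii) are consistent with the surrounding discussion in the paper and do not change the argument.
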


Under the same conditions we can prove that  operator $P^2_a:= -a(x)\frac{\ud^2}{\ud x^2} a(x)$ is self-adjoint.

Let us now see how the window function used in \cite{gazkoi19}, obtained from $u\sez=\chi\se$, $E=(\alpha,\beta)$, through  coherent state quantization, i.e., when $|\psi\rg$ in \eqref{psiPiqp} is given by the Gaussian $\psi(x)=   \left( \frac{1}{\pi} \right)^{1/4} e^{-x^2/2}$,   fits in our scheme (up to an irrelevant factor).

\beex \label{exone}  Let $a(x)=\int_{x-\beta}^\infty e^{-t^2}\ud t -\int_{x-\alpha}^\infty e^{-t^2}\ud t$, with $\alpha<\beta$.
Clearly $a \in C^\infty(\mb R)$ and one can also write
$$a(x)=\int_{x-\beta}^{x-\alpha} e^{-t^2}\ud t= \int^{\beta}_{\alpha} e^{-(t-x)^2}\ud t\, .$$
Observe that $a(x)>0$ for every $x\in {\mb R}$.

A simple estimation shows that
$$ a(x)=\int_{x-\beta}^{x-\alpha} e^{-t^2}\ud t \leq \beta-\alpha;$$
Hence, $a\in L^\infty (\mb R)$.
\begin{equation}\label{eq_deriv}a'(x)= -e^{-(x-\beta)^2}+ e^{-(x-\alpha)^2}\geq 0\end{equation} when $x\leq  \frac{\beta +\alpha}{2}$; so that $a$ attains its maximum in $ \frac{\beta +\alpha}{2}$. On the other hand, $\ds \inf_{x\in {\mb R}}a(x)=0$. Hence the function $\ds \frac{1}{a(x)}$ is everywhere defined in ${\mb R}$ but it is unbounded. Therefore $A^{-1}$ exists but is unbounded.
The domain of $PA$ is dense since it contains $C_c^\infty(\mb R)$.
The function $a$ is also in $\ltwo$ as well as its derivative; so $a\in \sob$. 
Finally, we observe that $a\in {\mc A}$. Proposition \ref{prop17} applies then ve can conclude that the corresponding operator $P_a$ is self-adjoint.
\enex

{\berem As mentioned at the very beginning of this paper, the momentum operator for a particle moving in a bounded interval $[a,b]$ (with Dirichlet boundary conditions) is not (essentially) self-adjoint. It is clear that we can regard this operator as being of the form $APA$ where $P$ is, as above, the momentum on the real line and $A$ is the multiplication operator by the characteristic function of $[a,b]$. It is clear as well that this case is not covered by any assumption we have made on the function $a$ used for {\em compressing} the operator $P$. 
\enrem}

\section{Conclusion}
\label{conclu}
In this work we have examined the question of the stability of self-adjointness when a self-adjoint operator $P$ is replaced with $APA$, where $A$ is a bounded everywhere defined symmetric operator. This problem is raised within the context of Weyl-Heisenberg covariant integral quantization of truncated classical observables, for instance by replacing the momentum $p$ of a particle moving on the line with its truncated version $\chi\se (q)p$, where $E= (\alpha,\beta)$, and it is connected to the analysis of non-Hermitian Hamiltonians, \cite{benbook,bagbook,bagspringer}. We have proved that in this case the quantum version $APA$ of $\chi\se (q)p$ is self-adjoint. The natural next step to be considered is to determine spectral features  of $APA$ with regard to the spectrum of some self-adjoint extension of the momentum operator  of a particle constrained to move within a bounded set in $\R$, like   the interval $(\alpha,\beta)$, and more generally to compare  the evolution operators $e^{-\ii P^2}$ in $L^2((\alpha, \beta))$ and $e^{-\ii AP^2A}$ in $L^2(\R)$.

\subsection*{Acknowledgments}
J.-P. Gazeau is indebted to the Dipartimento di Matematica e Informatica,
Universit\`a di Palermo, and to the Gruppo Nazionale di Analisi Matematica  la Probabilit\`a e le loro Applicazioni of Indam, for financial support.  F. B. acknowledges the University of Palermo, and the Gruppo Nazionale di Fisica Matematica of Indam. C. T. acknowledges the University of Palermo, and the Gruppo Nazionale di Analisi Matematica  la Probabilit\`a e le loro Applicazioni of Indam.

\subsection*{Conflict of interest} On behalf of all authors, the corresponding author states that there is no conflict of interest.

\end{document}